\documentclass[11pt]{article}

\usepackage{arxiv}

\usepackage[utf8]{inputenc} 
\usepackage[T1]{fontenc}    
\usepackage{hyperref}       
\usepackage{url}            
\usepackage{microtype}      
\usepackage{cite}
\usepackage{natbib}
\usepackage{doi}

\usepackage{booktabs}
\usepackage{amsmath,amssymb,amsfonts}
\usepackage{cleveref}       
\usepackage{algorithmic}
\usepackage{graphicx}
\usepackage{textcomp}
\usepackage{xcolor}
\usepackage{comment}
\usepackage{amsmath,amsthm,amsfonts,amssymb,amscd}

\newtheorem{theorem}{Theorem}
\newtheorem{example}{Example}[section]

\newtheorem{lemma}[example]{Lemma}

\usepackage{blindtext}
\usepackage{abraces}
\usepackage{float}
\usepackage{resizegather}
\usepackage{nicefrac}
\usepackage{graphicx,subcaption}
\usepackage{multirow}

\definecolor{color4}{RGB}{179, 43, 59}
\definecolor{green2}{rgb}{0, 153, 153}
\definecolor{wpurple}{rgb}{.341, .035, .953}

\author{ \href{https://orcid.org/0000-0001-6679-2117}{\includegraphics[scale=0.06]{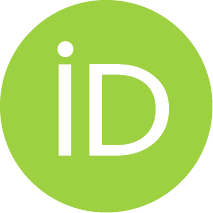}\hspace{1mm}Thiago S. Gomides}
		\\
	School of Computer Science\\
	Carleton University\\
	Ottawa, ON, Canada \\
	\texttt{thiagodasilvagomides@cmail.carleton.ca} \\
	\And
	\href{https://orcid.org/0000-0002-8959-4428}{\includegraphics[scale=0.06]{orcid.pdf}\hspace{1mm}Evangelos Kranakis} \\
	School of Computer Science\\
	Carleton University\\
	Ottawa, ON, Canada \\
	\texttt{kranakis@scs.carleton.ca} \\
		\And
	\href{https://orcid.org/0000-0003-4686-9433}{\includegraphics[scale=0.06]{orcid.pdf}\hspace{1mm}Ioannis Lambadaris} \\
	Department of Systems and Computer Engineering\\
	Carleton University\\
	Ottawa, ON, Canada \\
	\texttt{ioannis@sce.carleton.ca} \\
		\And
	~~~~~~~~~~~Yannis Viniotis \\
	~~~~~~~~~~~Department of Electrical and Computer Engineering\\
	~~~~~~~~~~~North Carolina State University\\
	~~~~~~~~~~~Raleigh, NC, USA \\
	~~~~~~~~~~~\texttt{candice@ncsu.edu} \\
}


\hypersetup{
pdftitle={Optimal Control for Platooning in Vehicular Networks},
pdfsubject={},
pdfauthor={Thiago S. Gomides, Evangelos Kranakis, Ioannis Lambadaris, Yannis Viniotis},
pdfkeywords={Truck platooning, Optimal control of queues.},
}

\newif\ifbulletlist
\newif\iftext
\texttrue 

\title{Optimal Control for Platooning in Vehicular Networks}
\date{}

\begin{document}
\maketitle

\begin{abstract}
As the automotive industry is developing autonomous driving systems and vehicular networks, attention to truck platooning has increased as a way to reduce costs (fuel consumption) and improve efficiency in the highway. Recent research in this area has focused mainly on the aerodynamics, network stability, and longitudinal control of platoons. However, the system aspects (e.g., platoon coordination) are still not well explored. In this paper, we formulate a platooning coordination problem and study whether  trucks waiting at an initial location (station) should wait for a platoon to arrive in order to leave. Arrivals of trucks at the station and platoons by the station are modelled by independent Bernoulli distributions. Next we use the theory of Markov Decision Processes to formulate the dispatching control problem and derive the optimal policy governing the dispatching of trucks with platoons. We show that the policy that minimizes an average cost function at the station is of threshold type. Numerical results for the average cost case are presented. They are consistent with the optimal ones.

\keywords{Truck platooning, Optimal control of queues.}

\end{abstract}

\section{Introduction} 
\ifbulletlist
{\color{red} 
\begin{enumerate}
    \item Truck platooning definition and its potential.
    \item Importance of studying platooning under concrete math models.
\end{enumerate}
}
\fi 

Truck platooning is the practice of virtually connecting two or more automated trucks forming convoys, where trucks follow one another closely. This practice has recently gained attention as the automotive industry develops toward autonomous driving systems and vehicular networks \cite{Adler2020}. 
It holds great potential to make traffic more efficient and clean. In particular, allowing close-distance driving mitigates the effects of aerodynamic drag, which in turn leads to a substantial reduction in fuel consumption \cite{Zabat1995TheAP}. Platooning also optimizes highway use, reduces travel times and enhances transportation safety.
 
The benefits of platooning may vary due to the complex and dynamic behaviour of trucks and the 
resulting traffic. For instance, this potential depends on several aspects, such as the inter-vehicle gap in a platoon, the travel speed, the cost of platooning formation, and others \cite{Zabat1995TheAP}. Therefore, it is crucial to study and understand platooning under concrete mathematical models.

\subsection{Related Work}
 \ifbulletlist
{\color{red} 
\begin{enumerate}
    \item Overview of recent platooning research. Coordination techniques haven't been well explored.
    \item Description of platooning coordination problems listing some relevant related work.  
    \item Example of three related works that approached a similar but not the same problem.
\end{enumerate}
}
\fi 
Most of the research efforts so far have been concerned with studying the aerodynamic aspects of platooning \cite{Zabat1995TheAP,8604977}, the cooperative longitudinal control of trucks \cite{7497531,8957499,6588305}, and the stability of the platoons from a network perspective \cite{9416853,8967210}. However, the system aspects 
(e.g., platoon coordination) are still not well explored \cite{Adler2020}, especially concerning optimal control.

Previous works on platooning coordination considered the dispatching control of trucks waiting in a station/hub \cite{ZHANG20171,Adler2020,Johansson2020TruckPF}. In these works, trucks arrive at the station following a random process (e.g., Poisson or Bernoulli), and the station decides whether they should wait to form  platoons. If the station holds them, it may build platoons with many trucks, which reduces fuel consumption. However, forcing many trucks to wait at the station incurs high transportation delay cost. These works investigated the optimal dispatching control at the station that minimizes an average cost function.

In \cite{ZHANG20171}, the authors studied optimal platoon coordination at a highway junction (hub). Their model consisted of two trucks  arriving at the hub with stochastic arrival times. If they arrive at the same time, they form a platoon. One truck may have to wait for the other when their arrival times differ, which incurs a waiting cost. The authors proved that it is optimal to build a platoon only when the arrival time of each truck differs by less than a threshold. 

In \cite{Adler2020}, the authors studied platooning coordination of multiple trucks at a station, where the arrivals are Poisson distributed. The station decides whether to hold trucks 
in order to build platoons. The authors compared different truck dispatching policies 
governing the station under energy-delay tradeoff considerations. They proved the optimality of threshold policies to control station dispatching, where the station 
dispatches all trucks whenever the number of waiting trucks in the station grows above the threshold. In \cite{Johansson2020TruckPF}, the authors proposed a similar model to the one in \cite{Adler2020} under the assumption that arrivals of trucks are i.i.d and their distribution is known by the station. 
They proved that the optimal policy is of the type "one time-step look-ahead".

\subsection{Model Novelty and Main Contributions}
 \ifbulletlist
{\color{red} 
\begin{enumerate}
    \item Description of the problem approached in the paper. 
    \item Description of the techniques we use and our main contributions.
    \item Description of the following sections.
    \item Proofs are omitted and can be found in \cite{arxiv}.
\end{enumerate}
}
\fi 

In this paper, we study the dispatching and formation of platoons from a novel perspective. In particular, like \cite{Adler2020}, \cite{Johansson2020TruckPF} (and unlike \cite{ZHANG20171}), in our model we consider a waiting station that can hold multiple trucks. However, unlike \cite{Adler2020} and \cite{Johansson2020TruckPF},
we assume that trucks arrive at the station, while platoons arrive alongside the station.  Therefore, we dispatch trucks with arriving platoons, as opposed to forming platoons among waiting trucks.   Our cost function is also different.  

We formalize   the dispatching actions at the station as an optimal control problem. We then use dynamic programming to analyze it.   The main contributions of this work are as follows:
\begin{itemize}
	\item We derive the optimal dispatching policy 
 governing the system under finite and infinite horizon discounted cost criteria. We show that the policies are of threshold type with a finite threshold. 
	\item We use Lippman's \cite{10.2307/2629409} results to  derive the optimal policy for the average cost criterion.
	\item We present numerical results for the average cost case.
\end{itemize}

 The paper is organized as follows. In Section \ref{sec:formulation}, we formulate the  dispatching trucks to arriving platoons problem.  We also present the Markov Decision Problem. We characterize the optimal control policy for the discounted cost (with finite and infinite horizons) in Section \ref{sec:optimal}. In Section \ref{sec:longrun}, we derive the optimal policy for the average cost problem. We present some numerical results for the average cost problem in Section \ref{sec:results}. Our conclusion and suggestions for future work are presented in Section \ref{sec:conclusion}. 

\section{Model and Control Problem Formulation}\label{sec:formulation}
\ifbulletlist
{\color{red} 
\begin{enumerate}
    \item Definition of the station, trucks/platoons arrivals, and dispatching control.
    \item Description of holding and dispatching actions and their benefits/costs. 
    \item Assumption of platoons providing the same energy reduction. Introduction of $\kappa>$ holding cost. 
    \item A practical argument that justifies the assumption.
\end{enumerate}
}
\fi 

We consider the platooning system shown in Figure \ref{fig:overview}.  
\begin{figure}[h]
\centering
\includegraphics[width=.8\linewidth]{./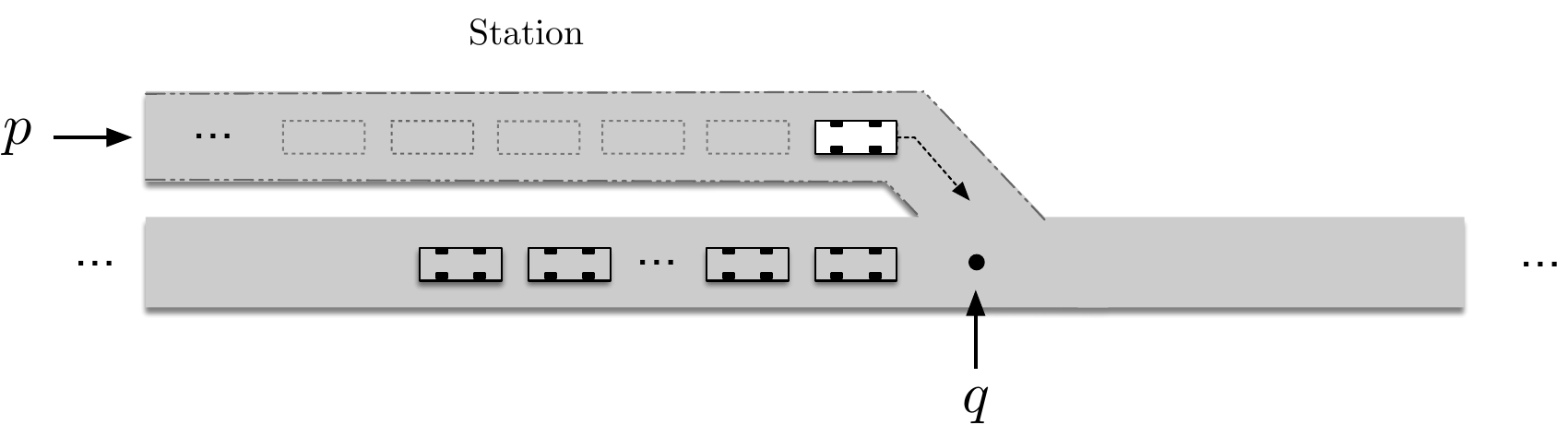}
\caption{System model.}
\label{fig:overview}
\end{figure}

Define the station as a location where trucks arrive and can wait to join platoons. Platoons arrive alongside the station. The station decides how trucks are dispatched from the station. We assume that only one truck at a time can be dispatched.

The system operates in discrete time.  Truck arrivals are Bernoulli with parameter $p$. Trucks join the same queue upon arrival. 
Platoon arrivals are Bernoulli with  parameter $q$.
In order to avoid trivialities, we assume that $0<p<1$ and $0<q<1$.

We assume truck and platoon arrivals are a sequence of ordered events in one time slot. A truck arrival (if any) always takes place earlier than a platoon arrival (if any). This assumption guarantees that if the station is empty and a truck arrives, it can join a platoon if any arrives.

\subsection{The Markovian Decision problem}

Let $x_n$ denote the number of trucks waiting at the station at slot $n$, $n \in \{1,2,\dots\}$. $x_n$  is the state of the system and $X = \{0,1,2,\dots\}$ is the state space of the system. 

Define the events as the combinations of arrivals (of a truck or a platoon) that may occur during a time slot. Each event has a given probability and a state operator (mapping $X$ into $X$). Transitions among the states are described in Table~\ref{tab:dp}.
\begin{table}[H]
\centering
\begin{tabular}{llr}       
\toprule
\textbf{Events}  & \textbf{State Operators}  & \textbf{Probabilities}   \\ 
\hline
No arrivals.     & $Z(x) = x.$  & $(1-p)(1-q)$                         \\
A platoon arrives and no truck arrives.  & $P(x) = x.$  & $(1-p)q$                           \\
A truck arrives and no platoon arrives. & $V(x) = x+1.$               & $p(1-q)$             \\
A platoon and a truck arrive.    & $B(x) = x+1.$                & $pq$                 \\
\bottomrule
\end{tabular}
\caption{Events, State Operators, and Probabilities.}
\label{tab:dp}
\end{table}
Define the action space $\mathcal{A}=\{H,D\}$, where the action operator $H$ represents the action of holding a truck at the station, and $D$ denotes the dispatching action. We have:
\begin{align*}
H (x) &=  x,  \operatorname{\textit{     ~~~~~dom }}  H = X. \\
D (x) &=  x - 1, \operatorname{\textit{ dom }}  D = \{x \in X: x \geq 1\}.
\end{align*}
$\mathcal{A}(x_n)$ represents the available actions for a given state $x_n$ at time slot $n$.

Dispatching trucks with platoons reduces energy (fuel) consumption. Holding trucks at the station (when waiting for platoons) 
incurs transportation delay costs. Trucks dispatched without platoons pay the entire transportation cost. 

We assume all arriving platoons provide the same energy reduction. It simplifies the model as the cost of dispatching a truck by platooning (or not) becomes deterministic. We introduce a real-valued constant $\kappa$ representing the cost of dispatching a truck without a platoon.

We formalize these assumptions mathematically with $c(x_n,a_n)$, the instantaneous cost as a function of the system state $x_n$ when taking the action $a_n$ at time slot $n$. More specifically,
satisfies:
\begin{align}
c(x_n,a_n) =
  \begin{cases}
    x_n & \text{if } a_n = H.  \\
    x_n - 1 & \text{if} \text{ a platoon arrives and } a_n = D.   \\
    x_n - 1 + \kappa & \text{if} \text{ no platoon arrives and } a_n = D. \end{cases}
  \label{eq:cost}
\end{align}

We complete the specification of our Markov Decision Problem (MDP) by defining the transition probability function as follows:
\begin{align}
P(x_{n+1}|x_{n} = x, a_n) =
\begin{cases}
(1-p)(1-q) & \text{ if } x_{n+1} = Z(x), \hspace{.5cm} a_n = H.\\
(1-p)(1-q) & \text{ if } x_{n+1} = x - 1, \hspace{.42cm}a_n = D.\\
(1-p)q &  \text{ if } x_{n+1} = P(x), \hspace{.49cm} a_n = H.\\
(1-p)q &  \text{ if } x_{n+1} = x - 1, \hspace{.43cm}a_n = D.\\
p(1-q) & \text{ if }  x_{n+1} = V(x), \hspace{.49cm}a_n = H.\\
p(1-q) & \text{ if }  x_{n+1} = x, \hspace{1.1cm}a_n = D.\\
pq &  \text{ if }x_{n+1} = B(x), \hspace{.49cm}a_n = H.\\
pq &  \text{ if }x_{n+1} = x, \hspace{1.1cm}a_n = D.
\end{cases}
\label{transitions}
\end{align}

Our goal is to choose the control actions to minimize the expected finite horizon discounted cost, as 
\begin{align}
\mathbb{E} \sum_{n = 1}^{N} \beta^n c(x_n,a_n),
\label{eq:cost2}
\end{align}
where $\beta$ is a discount factor $0 < \beta < 1$, and $N$ is the time horizon.

Since $c(x_n,a_n)$ grows linearly $\forall x \in X$ and $N$ is finite, it is well known that there exists an optimal stationary (time-independent) policy that minimizes the cost \eqref{eq:cost2} and is the unique solution for the MDP. 

Define $J^\beta_N(x)$ as the minimum expected discounted cost for \eqref{eq:cost2} with $N$ steps to go and initial state $x_0 = x$. 
 $J^{\beta}_N(x)$ satisfies:
 \begin{align}
\notag &J^{\beta}_N(x) = \min\limits_{a\in \mathcal{A}(x)}\{ c(x,a) + \beta [ (1-p)(1-q)  {J^{\beta}_{N-1}(}Z(x{))} +    (1-p)q  {J^{\beta}_{N-1}(}P(x){)}\,+ \\ &\hspace{7.092cm}    p(1-q)  {J^{\beta}_{N-1}(}V(x){)} +    pq  {J^{\beta}_{N-1}(}B(x){)}] \}, \label{eq:dp2} 
 \end{align}
where the initial condition is $J^{\beta}_1(x) = \min \{c(x,a)| ~a\in \mathcal{A}(x)\}$. 

From the Dynamic Programming (DP) equation \eqref{eq:dp2}, we immediately see that with $n+1$ steps to go and initial state $x_0=x$, the optimal action $a$ is given by the difference $J_{n+1}^\beta(H(x)) - J_{n+1}^\beta(D(x))$, as:
\begin{align}
    a = \left\{\begin{array}{@{}lr@{}}
        \multirow{2.5}{*}{H, ~~if} & (1-p)(1-q)[{J^{\beta}_{n}}(x)-{J^{\beta}_{n}}(x-1)] + (1-p)q[{J^{\beta}_{n}}({x}) - {J^{\beta}_{n}}({x-1})]~~ \\
            &+\, p(1-q)[{J^{\beta}_{n}}(x+1)- {J^{\beta}_{n}}(x)] + pq[{J^{\beta}_{n}}({x+1})-{J^{\beta}_{n}}({x})] \leq \dfrac{-1+\kappa}{\beta}. \vspace{.cm}\\
                               \\
        \multirow{2.5}{*}{D, ~~if} & (1-p)(1-q)[{J^{\beta}_{n}}(x)-{J^{\beta}_{n}}(x-1)] + (1-p)q[{J^{\beta}_{n}}({x}) - {J^{\beta}_{n}}({x-1})]~~ \\
          &+\,p(1-q)[{J^{\beta}_{n}}(x+1)- {J^{\beta}_{n}}(x)] + pq[{J^{\beta}_{n}}({x+1})-{J^{\beta}_{n}}({x})] > \dfrac{-1+\kappa}{\beta}. \\
        \end{array}
         \color{white}\right\} 
	\label{eq:cases}
\end{align}
We will use \eqref{eq:cases} to characterize the optimal policy for \eqref{eq:dp2} in the next section.

\section{Characterization of the Optimal Policy}\label{sec:optimal}

In this section, we first prove some properties of the optimal cost function; we then use them to characterize the optimal policy as a threshold policy.

The following Lemma (which we can show using Equations \eqref{eq:cost} and \eqref{eq:dp2})  simplifies the search for an optimal policy, since we can disregard the holding actions with probabilities $pq$ and $(1-p)q$ in Equation \eqref{transitions}. 
\begin{lemma}
\label{lemma:dispatch}
Dispatching a truck with an arriving platoon is always optimal. 
\end{lemma}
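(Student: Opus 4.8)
The plan is to argue event by event. Reading the dynamic program \eqref{eq:dp2} as choosing the action after the slot's arrivals are observed, I write the cost-to-go of action $a$ in state $x$ under event $e$ as $Q_N(x,a;e)=c(x,a)+\beta J^{\beta}_{N-1}(\tilde a(e(x)))$, and it then suffices to show that $Q_N(x,D;e)<Q_N(x,H;e)$ for every horizon $N$, every $x\ge 1$, and every platoon-arrival event $e\in\{P,B\}$. By \eqref{eq:cost} the dispatch cost in such an event carries no $\kappa$ term, so $c(x,D)=x-1$ while $c(x,H)=x$; thus removing one truck together with the platoon buys a unit of immediate reward, and the only remaining question is whether the resulting smaller queue is also cheaper to continue from.

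The single structural fact I need is that $J^{\beta}_N$ is non-decreasing in $x$ for every $N$, which I would prove first, by backward induction on $N$, independently of the present statement (so there is no circularity). The base case is read off $J^{\beta}_1$: each $c(x,a)$ equals $x$ plus a constant, so the minimum over the available actions is non-decreasing. For the step, fix an event $e$; the operator $\tilde a\circ e$ is non-decreasing and $c(\cdot,a)$ is non-decreasing, so $Q_N(\cdot,a;e)$ is non-decreasing for each fixed $a$, and this property survives both the pointwise minimum over $a$ and the averaging over $e$. The one delicate point is the boundary $x=0$, where $D\notin\mathcal A(0)$ and the minimum at $x=1$ is taken over a strictly larger action set; here I would verify $J^{\beta}_N(1)\ge J^{\beta}_N(0)$ by hand, which uses only $\kappa\ge 0$ (solo dispatch is not cheaper than platooned dispatch).

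With monotonicity available, the comparison is a two-line computation. For event $P$ one has $\tilde H(P(x))=x$ and $\tilde D(P(x))=x-1$, and for event $B$ one has $\tilde H(B(x))=x+1$ and $\tilde D(B(x))=x$; combining these with $c(x,H)-c(x,D)=1$ gives
\begin{align*}
Q_N(x,H;P)-Q_N(x,D;P)&=1+\beta\bigl[J^{\beta}_{N-1}(x)-J^{\beta}_{N-1}(x-1)\bigr]\ge 1>0,\\
Q_N(x,H;B)-Q_N(x,D;B)&=1+\beta\bigl[J^{\beta}_{N-1}(x+1)-J^{\beta}_{N-1}(x)\bigr]\ge 1>0.
\end{align*}
Hence in both platoon events the holding action is strictly dominated whenever $x\ge 1$, which is precisely the claim that the $(1-p)q$ and $pq$ holding branches in \eqref{transitions} may be discarded.

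The step I expect to demand the most care is the monotonicity induction, and within it the boundary $x=0$: because the admissible action set shrinks there, one cannot simply invoke ``a minimum of non-decreasing functions is non-decreasing'' and must instead compare the state-$1$ cost-to-go of each action against the state-$0$ hold value directly. Once that is settled, everything reduces to the arithmetic above, whose content is simply that dispatching with a platoon lowers the immediate cost by exactly $1$ and the next state by one unit, both of which favour dispatching.
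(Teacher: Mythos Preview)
Your argument is correct and shares the same skeleton as the paper's: both reduce the lemma to monotonicity of $J^{\beta}_n$ in $x$, observe that dispatching with a platoon lowers the immediate cost by $1$ and the continuation state by one unit, and conclude by the inequality $J^{\beta}_{n}(y)-J^{\beta}_{n}(y-1)\ge 0$. The difference lies in how monotonicity is justified. The paper, after expanding and cancelling, writes ``since $J^{\beta}_N(x)$ is convex (see Theorem~\ref{lemma2}), monotonicity is sufficed,'' i.e.\ it obtains monotonicity by a forward reference to the convexity theorem. You instead prove monotonicity directly by a self-contained backward induction, handling the $x=0$ boundary (where the admissible action set shrinks from $\{H,D\}$ to $\{H\}$) by the explicit comparison $Q_N(1,D;e)\ge Q_N(0,H;e)$, which indeed only requires $\kappa\ge 0$. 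This buys you independence from Theorem~\ref{lemma2} and, incidentally, closes a small gap in the paper's presentation: convexity on $\{0,1,2,\dots\}$ does not by itself imply monotonicity without first checking $J^{\beta}_N(1)\ge J^{\beta}_N(0)$, which the paper leaves implicit. Your event-by-event split of the $P$ and $B$ branches is equivalent to the paper's averaged inequality~\eqref{eq:optimality2} but makes the conditioning on a platoon arrival more transparent.
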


\begin{proof}
 
 We use induction on $n$ to prove that when a platoon arrives  
 \begin{align}
	J^{\beta}_n(D(x)) \leq J^{\beta}_n(H(x)), ~\forall x\in \text{dom} ~D.
	\label{eq:optimally}
\end{align} 

 
\noindent \textbf{Base case ($n=1$).} This proof is immediate since $J^{\beta}_1(x)$ is the instantaneous cost \eqref{eq:cost} and when the system state is $x_n \geq 1$ and a platoon arrives, $c(x_n, D) < c(x_n, H)$ for all $x\in \text{dom} ~D$.  \\

\noindent \textbf{Inductive step (from $n=N$ to $n=N+1$).} 
Assume that Inequality~\eqref{eq:optimally} is valid with $n=N$, for all $x \in \text{dom} ~D$. We will prove the same is true for all $x$ and $n=N+1$, i.e.,
\begin{align} 
J^{\beta}_{N+1}(D(x)) \leq J^{\beta}_{N+1}(H(x)).
\label{eq:eq:optimality0}
\end{align} 
We replace \eqref{eq:dp2} in \eqref{eq:eq:optimality0} and
 apply the operators $H$ and $D$ to obtain 
\begin{align}
\notag	x&-1 + \beta [ (1-p)(1-q)  {J^{\beta}_{N}(}x-1{)} +    (1-p)q  {J^{\beta}_{N}(}x-1{)}  \notag
 +    p(1-q)  {J^{\beta}_{N}}(x) +    pq  {J^{\beta}_{N}(}x{)}] \\
 \notag &\leq	x + \beta [ (1-p)(1-q)  {J^{\beta}_{N}}(x) +    (1-p)q  {J^{\beta}_{N}(}x{)}+    p(1-q)  {J^{\beta}_{N}(}x+1{)} +    pq  {J^{\beta}_{N}(}x+1{)}]. 
\end{align}
We now cancel identical terms and rearrange them to derive
\begin{align}
\notag	&-1 + \beta [ (1-p)(1-q) [ {J^{\beta}_{N}(}x-1{)} -{J^{\beta}_{N}}(x)]   +    (1-p)q  [{J^{\beta}_{N}(}x-1{)} -{J^{\beta}_{N}(}x{)}]\,+ \\
& \hspace{4.7cm} p(1-q) [ {J^{\beta}_{N}}(x) - {J^{\beta}_{N}}(x+1)]  +    pq  [ {J^{\beta}_{N}(}x{)} - {J^{\beta}_{N}(}x+1{)} ]]  \leq	\,0.
 \label{eq:optimality2}
\end{align}
From \eqref{eq:optimality2}, all we have to show is that $J^{\beta}_{N+1}(x)$ is monotone increasing for all $x$. However, since $J^{\beta}_N(x)$ is convex and $c(x_n,a_n)$ grows linearly $\forall x \in X$ (see Theorem \ref{lemma2}), monotonicity is sufficed. Therefore, Inequality~\eqref{eq:optimally} is valid for all $n \in N$ and  $x \in \text{dom} ~D$.
\end{proof} 

In Figure \ref{fig:transitionP}, we show the transition probabilities. 
\begin{figure}[h]
\centering
	\includegraphics[width=1\linewidth]{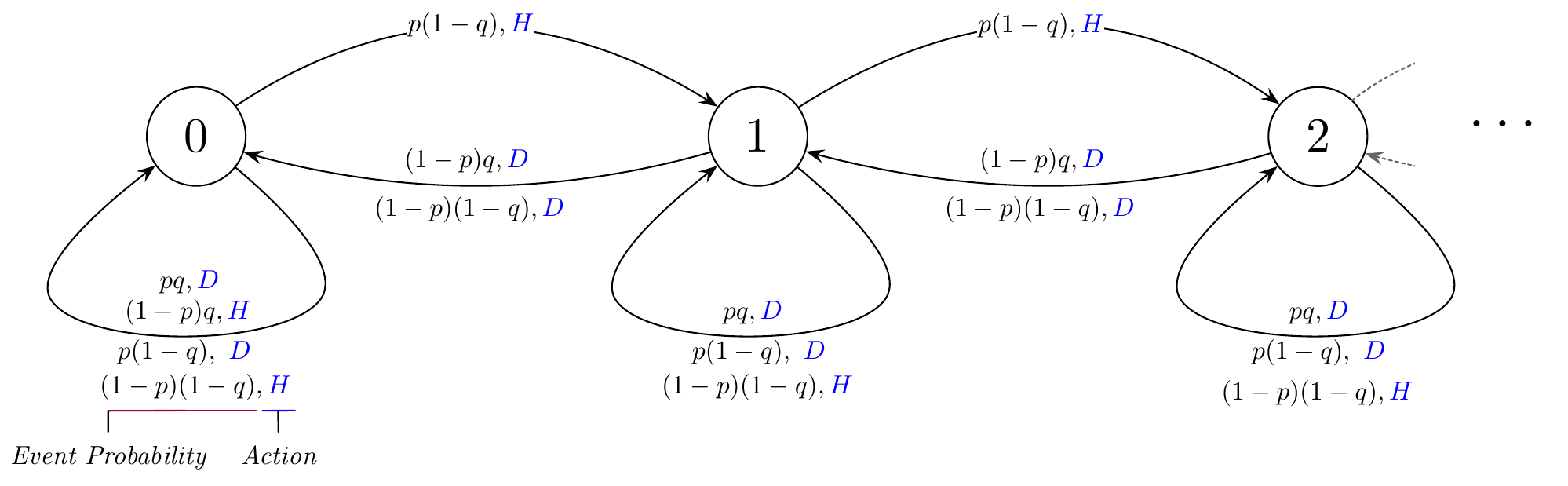}
	\caption{Transition probabilities using the result of Lemma \ref{lemma:dispatch}.}
	\label{fig:transitionP}
\end{figure}

\begin{theorem}
For each  $n\geq1$, the function $J^{\beta}_n$ is convex. Moreover, the difference $J_{n+1}^\beta(H(x)) - J_{n+1}^\beta(D(x))$ is monotone increasing.
\label{lemma2}
\end{theorem}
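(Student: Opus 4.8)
I would prove both assertions simultaneously by noting that, on the integer state space $X$, convexity of $J^\beta_{n+1}$ is the same statement as monotonicity of its first difference $J^\beta_{n+1}(x)-J^\beta_{n+1}(x-1)=J^\beta_{n+1}(H(x))-J^\beta_{n+1}(D(x))$. It therefore suffices to show by induction on $n$ that $J^\beta_n$ is convex; I would keep the argument self-contained, not invoking Lemma~\ref{lemma:dispatch}, whose own proof appeals to this theorem. For the base case $n=1$, the initial condition gives $J^\beta_1(0)=0$ and $J^\beta_1(x)=\min\{x,\,x-1+\kappa\}=x-1+\min\{1,\kappa\}$ for $x\ge1$; its successive differences are $\min\{1,\kappa\},1,1,\dots$, which are non-decreasing.

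For the inductive step I assume $J^\beta_n$ convex and group the four terms of the recursion~\eqref{eq:dp2} according to whether a truck arrives, obtaining the per-action costs-to-go
\begin{align*}
G^H(x)&=x+\beta\big[(1-p)J^\beta_n(x)+p\,J^\beta_n(x+1)\big],\\
G^D(x)&=(x-1+\kappa)+\beta\big[(1-p)J^\beta_n(x-1)+p\,J^\beta_n(x)\big],
\end{align*}
so that $J^\beta_{n+1}(x)=\min\{G^H(x),G^D(x)\}$ for $x\ge1$, while $J^\beta_{n+1}(0)=G^H(0)$ since only $H$ is admissible at $0$. Each of $G^H,G^D$ is convex, being a non-negative combination of integer shifts of the convex $J^\beta_n$ plus an affine term. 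The structural key is the identity $G^D(x)=G^H(x-1)+\kappa$, immediate from the displayed formulas; writing $g:=G^H$ it collapses the recursion to
\begin{align*}
J^\beta_{n+1}(x)=\min\{\,g(x),\,g(x-1)+\kappa\,\},\qquad x\ge1,
\end{align*}
with $g$ convex.

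The crux, and the step I expect to be the main obstacle, is that a pointwise minimum of two convex functions need not be convex, so I would exploit the single-crossing structure forced by convexity of $g$. Setting $g'(x):=g(x)-g(x-1)$, which is non-decreasing, the quantity $g(x)-\big(g(x-1)+\kappa\big)=g'(x)-\kappa$ changes sign at most once, yielding a threshold $x^\star$ with action $H$ (so $J^\beta_{n+1}=g(x)$) for $x\le x^\star$ and action $D$ (so $J^\beta_{n+1}=g(x-1)+\kappa$) for $x>x^\star$. The first differences of $J^\beta_{n+1}$ are then
\begin{align*}
\dots,\;g'(x^\star-1),\;g'(x^\star),\;\kappa,\;g'(x^\star+1),\;g'(x^\star+2),\;\dots,
\end{align*}
the lone value $\kappa$ sitting at $x=x^\star+1$. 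This sequence is non-decreasing precisely because $g'$ is non-decreasing on each side and the two defining threshold inequalities give $g'(x^\star)\le\kappa<g'(x^\star+1)$; the boundary state is the sub-case $x^\star=0$. Hence $J^\beta_{n+1}$ is convex, which is exactly the monotonicity of $J^\beta_{n+1}(H(x))-J^\beta_{n+1}(D(x))$, closing the induction.
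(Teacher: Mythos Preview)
Your proof is correct and follows a genuinely different — and considerably cleaner — route than the paper's. The paper establishes convexity by an explicit four-case analysis on which actions are optimal at $x-1$, $x$, $x+1$ (all-$H$, all-$D$, and the two mixed cases), expanding each case term by term against the inductive hypothesis; it then proves the monotonicity of the $H$-vs-$D$ difference separately by appealing to convexity and equation~\eqref{eq:cases}. Your key structural observation, the shift identity $G^D(x)=G^H(x-1)+\text{const}$, collapses all four cases into a single-crossing argument for $\min\{g(x),\,g(x-1)+\text{const}\}$ and makes the threshold structure (hence Theorem~\ref{theorem:finite}) transparent at the same time. The identity also survives the paper's post-Lemma~\ref{lemma:dispatch} formulation with the states $\overline{x}$, so your decision to avoid Lemma~\ref{lemma:dispatch} is not essential but does render the induction cleanly self-contained.

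Two minor remarks. First, the one-step cost under $D$ in~\eqref{eq:cost} depends on whether a platoon arrives; taking the expectation over that event gives $x-1+(1-q)\kappa$, so strictly your identity should read $G^D(x)=G^H(x-1)+(1-q)\kappa$. The particular constant is immaterial to your argument, and the paper's own comparison~\eqref{eq:cases} uses plain $\kappa$ just as you do, so you are at least consistent with the paper. Second, the paper seems to intend $J_{n+1}^\beta(H(x))-J_{n+1}^\beta(D(x))$ as the \emph{action-value} difference at stage $n{+}1$ (cf.\ \eqref{eq:cases}), not the literal first difference $J_{n+1}^\beta(x)-J_{n+1}^\beta(x-1)$ you work with. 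In your notation that action-value difference is $g(x)-\big(g(x-1)+\text{const}\big)=g'(x)-\text{const}$, whose monotonicity is exactly convexity of $g=G^H$, which you already obtain from the inductive hypothesis; hence your argument covers either reading of the statement.
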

\begin{proof}
We first prove the convexity of $J^{\beta}_n$. We use induction on $n$ to prove that  for all $x$
\begin{align} 
\label{eq:convex1}
J^{\beta}_n(x+1) + J^{\beta}_n(x-1) \geq 2J^{\beta}_n(x) .
\end{align} 
 
\noindent \textbf{Base case ($n=1$).} This is immediate if we recall that $J^{\beta}_1 (x) = c(x,a)$, which is a linear function in $x$, for all $a$.

\noindent\textbf{Inductive step (from $n=N$ to $n=N+1$).} 
Assume that Inequality~\eqref{eq:convex1} is valid with $n=N$, for all $x$. We will prove the same is true for $n=N+1$, i.e.,
\begin{align} 
\label{eq:convex2}
J^{\beta}_{N+1}(x+1) + J^{\beta}_{N+1} (x-1) \geq 2J^{\beta}_{N+1} (x), 
\end{align} 
for all $x$ and $a \in \{H,D\}$. 

Remark: Using \eqref{eq:convex2}, we can see that  there are eight possible combinations of actions as $a \in \{H,D\}$.
 However, from \eqref{eq:cases}, it follows that we need to prove that \eqref{eq:convex2} holds only for four of the cases, since once $D$ is optimal for $x$ (i.e., $x>m$), $D$ is optimal for all $x'>x$. \\


\noindent \textbf{Case 1: Holding action for $x-1$, $x$, and $x+1$.\\
}
Applying the holding action to \eqref{eq:dp2}, the following identity holds
\begin{align}
J^{\beta}_{N+1} (x) 
&=  x + \beta 
( (1-p)(1-q) J^{\beta}_N (x) + p(1-q) J^{\beta}_N(x+1) \,+ (1-p) q J^{\beta}_N ({x}) + pq J^{\beta}_N( {x+1})).  \label{eq:convexH2-x}
\end{align} 
We apply Identity~\eqref{eq:convexH2-x} to $x-1$ and $x+1$ and we derive
\begin{align}
J^{\beta}_{N+1} (x+1) 
&= \label{eq:convexH2-x+1}  
 \notag x + 1 + \beta ( (1-p)(1-q) J^{\beta}_N (x+1) + p(1-q) J^{\beta}_N(x+2) + (1-p) q J^{\beta}_N ({x+1}) \,+ \\
& \hspace{11.5cm}  pq J^{\beta}_N( {x+2})).   \\
J^{\beta}_{N+1} (x-1) 
&= \label{eq:convexH2-x-1}  
 x - 1 + \beta ( (1-p)(1-q) J^{\beta}_N (x-1) + p(1-q) J^{\beta}_N(x) \,+  (1-p) q J^{\beta}_N ({x-1}) + pq J^{\beta}_N( {x})).
\end{align} 
Substituting Identities~\eqref{eq:convexH2-x},~\eqref{eq:convexH2-x+1},~and~\eqref{eq:convexH2-x-1} into Inequality~\eqref{eq:convex2} we obtain
\begin{align}
\label{eq:convexH3}
&  \notag  x + 1 + \beta ( (1-p)(1-q) J^{\beta}_N (x+1) + p(1-q) J^{\beta}_N(x+2) + (1-p) q J^{\beta}_N ({x+1}) + pq J^{\beta}_N( {x+2}))
\\  \notag & + x - 1 + \beta( (1-p)(1-q) J^{\beta}_N (x-1) + p(1-q) J^{\beta}_N(x) + (1-p) q J^{\beta}_N ({x-1}) + pq J^{\beta}_N( {x})) 
\\ & \hspace{.cm} \geq  2(x +\beta ( (1-p)(1-q) J^{\beta}_N (x) + p(1-q) J^{\beta}_N(x+1) + (1-p) q J^{\beta}_N ({x}) + pq J^{\beta}_N( {x+1}))).
\end{align}
Rearranging Inequality~\eqref{eq:convexH3} as indicated, all we need to prove is
\begin{align*}
&(1-p)(1-q)(J^{\beta}_N (x+1)+J^{\beta}_N (x-1)-2J^{\beta}_N (x)) + p(1-q)(J^{\beta}_N (x+2)+J^{\beta}_N (x)-2J^{\beta}_N (x+1)) + \\
&(1-p)q(J^{\beta}_N ({x+1})+J^{\beta}_N ({x-1})-2J^{\beta}_N ({x}))\,+ pq(J^{\beta}_N ({x+2})+J^{\beta}_N ({x})-2J^{\beta}_N ({x+1})) \geq 0.
\end{align*}
However, this is obvious using the convexity of $J^{\beta}_N$, which is true by the inductive hypothesis. \\

\noindent \textbf{Case 2: Dispatching action for $x-1$, $x$, and $x+1$.\\} 
Applying the dispatching action to \eqref{eq:dp2}, the following identity holds
\begin{align}
J^{\beta}_{N+1} (x) 
&= \label{eq:convex2-x}
  x-1+\kappa + \beta 
( (1-p)(1-q) J^{\beta}_N (x-1) + p(1-q) J^{\beta}_N(x) \,+ (1-p) q J^{\beta}_N ({x-1}) + pq J^{\beta}_N( {x})).  
\end{align} 
If we also apply Identity~\eqref{eq:convex2-x} to $x-1$ and $x+1$ and derive
\begin{align}
J^{\beta}_{N+1} (x+1) 
&= \label{eq:convex2-x+1}  
 x+\kappa + \beta 
( (1-p)(1-q) J^{\beta}_N (x) + p(1-q) J^{\beta}_N(x+1) \,+ (1-p) q J^{\beta}_N ({x}) + pq J^{\beta}_N( {x+1})). \\
J^{\beta}_{N+1} (x-1) 
&= \label{eq:convex2-x-1} 
\notag x-2+\kappa + \beta 
( (1-p)(1-q) J^{\beta}_N (x-2) + p(1-q) J^{\beta}_N(x-1) \,+ \\
& \hspace{8.3cm} (1-p) q J^{\beta}_N ({x-2}) + pq J^{\beta}_N( {x-1})).  
\end{align} 
Substituting Identities~\eqref{eq:convex2-x},~\eqref{eq:convex2-x-1},~and~\eqref{eq:convex2-x+1} into Inequality~\eqref{eq:convex2} we obtain
\begin{align}
& \notag x+\kappa + \beta 
( (1-p)(1-q) J^{\beta}_N (x) + p(1-q) J^{\beta}_N(x+1) \,+(1-p) q J^{\beta}_N ({x}) + pq J^{\beta}_N( {x+1}))\,+  \\
& \notag x-2+\kappa + \beta 
( (1-p)(1-q) J^{\beta}_N (x-2) + p(1-q) J^{\beta}_N(x-1) \,+ (1-p) q J^{\beta}_N ({x-2}) +  \\&pq J^{\beta}_N( {x-1})) +
\notag \geq 2(x-1+\kappa + \beta((1-p)(1-q) J^{\beta}_N (x-1) + p(1-q) J^{\beta}_N(x) \,+ \\ & \hspace{9cm}  (1-p) q J^{\beta}_N ({x-1}) + pq J^{\beta}_N( {x}))). 
\label{eq:convex3}
\end{align} 
Rearranging Inequality~\eqref{eq:convex3} as indicated, all we need to prove is
\begin{align*}
&(1-p)(1-q)(J^{\beta}_N (x)+J^{\beta}_N (x-2)-2J^{\beta}_N (x-1)) + p(1-q)(J^{\beta}_N (x+1)+J^{\beta}_N (x-1)-2J^{\beta}_N (x))  \\
&+(1-p)q(J^{\beta}_N ({x})+J^{\beta}_N ({x-2})-2J^{\beta}_N ({x-1}))\,+ pq(J^{\beta}_N ({x+1})+J^{\beta}_N ({x-1})-2J^{\beta}_N ({x})) \geq 0,
\end{align*}
which is true by the inductive hypothesis. \\

\noindent \textbf{Case 3: Dispatching action for $x$ and $x+1$, and holding action for $x-1$. \\} 
Substituting Identities    \eqref{eq:convexH2-x-1}, \eqref{eq:convex2-x}, and \eqref{eq:convex2-x+1}  into Inequality~\eqref{eq:convex2} we obtain
\begin{align}
& \notag x+\kappa + \beta 
( (1-p)(1-q) J^{\beta}_N (x) + p(1-q) J^{\beta}_N(x+1) \,+(1-p) q J^{\beta}_N ({x}) + pq J^{\beta}_N( {x+1}))\,+  \\
& \notag x - 1 + \beta( (1-p)(1-q) J^{\beta}_N (x-1) + p(1-q) J^{\beta}_N(x) + (1-p) q J^{\beta}_N ({x-1}) + pq J^{\beta}_N( {x})) \geq \\&
\notag 2(x-1+\kappa + \beta((1-p)(1-q) J^{\beta}_N (x-1) + p(1-q) J^{\beta}_N(x) \,+ (1-p) q J^{\beta}_N ({x-1}) + pq J^{\beta}_N( {x}))). 
\end{align}
Since \eqref{eq:convexH2-x-1} is identical to \eqref{eq:convex2-x}, we cancel equivalent  terms to derive 
\begin{align}
& \notag x+\kappa + \beta 
( (1-p)(1-q) J^{\beta}_N (x) + p(1-q) J^{\beta}_N(x+1) \,+(1-p) q J^{\beta}_N ({x}) + pq J^{\beta}_N( {x+1}))\geq x-1 \\&
 {+\,\kappa + \beta((1-p)(1-q) J^{\beta}_N (x-1) + p(1-q) J^{\beta}_N(x) \,+ (1-p) q J^{\beta}_N ({x-1}) + pq J^{\beta}_N( {x}))}. 
\label{eq:convexDHD3}
\end{align}
Rearranging Inequality~\eqref{eq:convexDHD3} as indicated, all we need to prove is
\begin{align}
& \notag (1-p)(1-q)(J^{\beta}_N (x) - J^{\beta}_N (x-1) ) +  p(1-q) (J^{\beta}_N(x+1) - J^{\beta}_N(x))\,+ \\  &\hspace{1.7cm} (1-p) q (J^{\beta}_N ({x}) - J^{\beta}_N ({x-1})) + pq (J^{\beta}_N( {x+1})-J^{\beta}_N( {x}))) \geq \frac{\kappa-1}{\beta}.
\label{eq:convexDHD4}
\end{align}
Since \eqref{eq:convexDHD4} is identical to \eqref{eq:cases}, the left side of \eqref{eq:convexDHD4} is greater than its right side whenever the dispatching action is taken. Thus, Inequality \eqref{eq:convexDHD4} holds. \\

\noindent \textbf{Case 4: Dispatching action for $x+1$, and holding action for $x-1$ and $x$. \\} 
Substituting Identities    \eqref{eq:convexH2-x}, \eqref{eq:convexH2-x-1}, and \eqref{eq:convex2-x+1}  into Inequality~\eqref{eq:convex2} we obtain
\begin{align}
& \notag x+\kappa + \beta 
( (1-p)(1-q) J^{\beta}_N (x) + p(1-q) J^{\beta}_N(x+1) \,+(1-p) q J^{\beta}_N ({x}) + pq J^{\beta}_N( {x+1}))\,+ 
\\  \notag &  x - 1 + \beta( (1-p)(1-q) J^{\beta}_N (x-1) + p(1-q) J^{\beta}_N(x) + (1-p) q J^{\beta}_N ({x-1}) + pq J^{\beta}_N( {x})) \geq
\\ & \hspace{.cm}   2(x +\beta ( (1-p)(1-q) J^{\beta}_N (x) + p(1-q) J^{\beta}_N(x+1) + (1-p) q J^{\beta}_N ({x}) + pq J^{\beta}_N( {x+1}))).
\end{align}
Since \eqref{eq:convexH2-x} is identical to \eqref{eq:convex2-x+1} (except for $\kappa$), we cancel equivalent terms to derive 
\begin{align}
 \notag &  \kappa +   x - 1 + \beta( (1-p)(1-q) J^{\beta}_N (x-1) + p(1-q) J^{\beta}_N(x) + (1-p) q J^{\beta}_N ({x-1}) + pq J^{\beta}_N( {x}))
\\ & \hspace{.cm}   \geq x +\beta ( (1-p)(1-q) J^{\beta}_N (x) + p(1-q) J^{\beta}_N(x+1) + (1-p) q J^{\beta}_N ({x}) + pq J^{\beta}_N( {x+1})).
\label{eq:convexDHHH}
\end{align}
Rearranging Inequality \eqref{eq:convexDHHH} as indicated, all we need to prove is
\begin{align}
& \notag (1-p)(1-q)(J^{\beta}_N (x-1) - J^{\beta}_N (x) ) +  p(1-q) (J^{\beta}_N(x) - J^{\beta}_N(x+1))\,+ \\  &\hspace{1.75cm} (1-p) q (J^{\beta}_N ({x-1}) - J^{\beta}_N ({x})) + pq (J^{\beta}_N( {x})-J^{\beta}_N( {x-1}))) \geq \frac{1-\kappa}{\beta}.
\label{eq:convexDHHH4}
\end{align}
Multiplying \eqref{eq:convexDHHH4} by $-1$, we obtain \eqref{eq:cases} when the holding action is chosen. Since both equations in \eqref{eq:convexDHHH4} take the holding action, \eqref{eq:convexDHHH4} holds by \eqref{eq:cases}. 

\textbf{Monotonicity.}

To prove that the difference $J_N^\beta(H(x))- J_N^\beta(D(x))$ is monotone increasing, we have to show that \begin{align}
\text{i.e.,~~~~} J_N^\beta(H(x))- J_N^\beta(D(x))\geq J_N^\beta(H(x-1))- J_N^\beta(D(x-1)), ~~~~~~~~\forall x \geq 1.
\label{eq:monotone}
\end{align}
Using Inequality \eqref{eq:cases}, we have to show that for every $x>m$, the following inequality holds\begin{align}
& \notag \dfrac{-\kappa+1}{\beta}+ (1-p)(1-q)({J^{\beta}_{n}}(x)-{J^{\beta}_{n}}(x-1)) + (1-p)q({J^{\beta}_{n}}({x}) - {J^{\beta}_{n}}({x-1}))\,  + \\\notag
     &\hspace{4.5cm}p(1-q)({J^{\beta}_{n}}(x+1)- {J^{\beta}_{n}}(x)) + pq({J^{\beta}_{n}}({x+1})-{J^{\beta}_{n}}({x})) \geq \\ &\dfrac{-\kappa+1}{\beta}+ (1-p)(1-q)({J^{\beta}_{n}}(m)-{J^{\beta}_{n}}(m-1)) + (1-p)q({J^{\beta}_{n}}({m}) - {J^{\beta}_{n}}({m-1}))\,\notag  + \\
     &\hspace{4cm}p(1-q)({J^{\beta}_{n}}(m+1)- {J^{\beta}_{n}}(m)) + pq({J^{\beta}_{n}}({m+1})-{J^{\beta}_{n}}({m})) > 0.
	\label{eq:sw3}
\end{align} We cancel identical terms in Inequality \eqref{eq:sw3} and rearrange it to derive
\begin{align}	
& \notag  (1-p)(1-q)({J^{\beta}_{n}}(x)-{J^{\beta}_{n}}(x-1) -(J^{\beta}_{n}(m) - J^{\beta}_{n}(m-1))) \,+\\\notag
     &\hspace{2cm}p(1-q)({J^{\beta}_{n}}(x+1)- {J^{\beta}_{n}}(x) - ({J^{\beta}_{n}}(m+1)- {J^{\beta}_{n}}(m)))\,+ \\&\hspace{3cm} (1-p)q({J^{\beta}_{n}}({x}) - {J^{\beta}_{n}}({x-1}) - ({J^{\beta}_{n}}({m}) - {J^{\beta}_{n}}({m-1})))\, \notag + \\&\hspace{4.6cm}   pq({J^{\beta}_{n}}({x+1})-{J^{\beta}_{n}}({x}) - ({J^{\beta}_{n}}({m+1})-{J^{\beta}_{n}}))> 0.
	\label{eq:sw4}
\end{align}
Since $J^\beta_N$ is convex and finite for all $x \in X$,  Inequality \eqref{eq:sw4} holds and so \eqref{eq:monotone}.
\end{proof}

From Equation~\ref{eq:cases} and this theorem, $\kappa \leq 1$ is a trivial case for which dispatching a truck (action $D$) is always optimal.

Denote by $\pi_m$  a \textit{threshold policy} with threshold $m$. Under policy $\pi_m$, the station dispatches a truck if and only if $x>m$ (with or without a platoon arrival). 

\begin{theorem}
	The optimal policy for the finite horizon discounted cost problem is of threshold type with a finite threshold.
\label{theorem:finite}
\end{theorem}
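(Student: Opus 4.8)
The plan is to read the optimal decision straight off the one-step optimality condition \eqref{eq:cases} and to show that the quantity being compared against the constant $(\kappa-1)/\beta$ is monotone in the state. Grouping the four transition terms in \eqref{eq:cases} (using $(1-p)(1-q)+(1-p)q=1-p$ and $p(1-q)+pq=p$), define the switching function
\begin{align*}
g_n(x):=(1-p)\big[J^{\beta}_n(x)-J^{\beta}_n(x-1)\big]+p\big[J^{\beta}_n(x+1)-J^{\beta}_n(x)\big],
\end{align*}
so that, with $n+1$ steps to go, holding is optimal at $x$ precisely when $g_n(x)\le(\kappa-1)/\beta$ and dispatching is optimal precisely when $g_n(x)>(\kappa-1)/\beta$ (at $x=0$ only $H$ is admissible, so holding is forced). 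By Theorem~\ref{lemma2} the function $J^{\beta}_n$ is convex, hence its forward differences $J^{\beta}_n(x+1)-J^{\beta}_n(x)$ are nondecreasing in $x$; as $g_n$ is a fixed convex combination of two consecutive forward differences, $g_n$ is itself nondecreasing. Therefore the dispatch region $\{x:g_n(x)>(\kappa-1)/\beta\}$ is an upper set: once dispatching is optimal at some $x_0$ it is optimal at every $x\ge x_0$, which is exactly the threshold shape already anticipated in the proof of Theorem~\ref{lemma2}. Writing $m_n:=\sup\{x\ge 1:g_n(x)\le(\kappa-1)/\beta\}$, the optimal $(n+1)$-step policy is the threshold policy $\pi_{m_n}$.

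It remains to prove $m_n<\infty$, which is the heart of the statement. The key estimate is a uniform bound on the marginal cost $d_n(x):=J^{\beta}_n(x+1)-J^{\beta}_n(x)$. I would obtain it by a sample-path (coupling) argument: run the chains started at $x$ and at $x+1$ on the same arrival stream, let the lower chain follow its optimal policy, and let the upper chain ``tag along'' one extra truck that is never dispatched while copying the lower chain's actions on the remaining trucks. The two queue lengths then stay exactly one apart and the tagged truck inflates the cost by exactly $1$ in every slot, giving
\begin{align*}
d_n(x)\le\sum_{k=1}^{n}\beta^{k}=:S_n,\qquad\forall x\in X.
\end{align*}
Running the reverse coupling (upper chain optimal, lower chain copying) shows that when $x\ge n$ the queue cannot empty within the remaining $n$ slots, since each slot removes at most one truck, so the tagged truck is necessarily held throughout and the bound is attained: $d_n(x)=S_n$ for all $x\ge n$. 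Consequently $g_n$ is nondecreasing with $g_n(x)=S_n$ for all $x\ge n+1$, i.e.\ $g_n(x)\uparrow S_n$.

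Combining the two ingredients finishes the argument. Since $g_n$ is nondecreasing with limiting value $S_n$, the threshold $(\kappa-1)/\beta$ is crossed at a finite state as soon as $S_n>(\kappa-1)/\beta$, i.e.\ $\kappa<1+\beta S_n$; in that regime $m_n\le n$, so $\pi_{m_n}$ is a genuine finite-threshold policy, and the claim is propagated to every horizon by the same induction on $n$ used for Theorem~\ref{lemma2} (the trivial case $\kappa\le1$ noted after that theorem gives $m_n=0$). I expect finiteness to be the only real obstacle: the threshold \emph{shape} is immediate from convexity, whereas excluding an infinite threshold forces one to pin down the marginal cost $d_n(x)$ for large $x$ and to compare its ceiling $S_n$ with the per-dispatch penalty $(\kappa-1)/\beta$. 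The delicate point in the coupling is the empty-queue boundary $x=0$: for the reverse coupling one must verify that when $x\ge n$ the upper chain stays at least $2$ throughout the $n$ remaining slots, so the lower chain never has to dispatch from an empty queue and the attainment $d_n(x)=S_n$ is not spoiled.
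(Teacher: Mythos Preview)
Your argument for the threshold \emph{shape} is exactly what the paper intends: the paper offers no separate proof of Theorem~\ref{theorem:finite} and simply lets it ride on Theorem~\ref{lemma2}, whose monotonicity of $J_{n+1}^\beta(H(x))-J_{n+1}^\beta(D(x))$ is precisely your statement that $g_n$ is nondecreasing. So on that part you and the paper coincide.

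Where you go further is on \emph{finiteness}, and here your coupling analysis is sound and in fact sharper than anything the paper supplies. Two comments. First, a bookkeeping point: unrolling the recursion~\eqref{eq:dp2} gives $J_n^\beta=c_1+\beta c_2+\dots+\beta^{n-1}c_n$, so the ceiling on the forward difference is $\sum_{k=0}^{n-1}\beta^k$, not $\sum_{k=1}^{n}\beta^k$; this is harmless for the argument but shifts the constants. Second, and more substantively, your own analysis shows that the ``finite threshold'' clause cannot be established unconditionally: when $g_n(x)\equiv S_n\le(\kappa-1)/\beta$ for all large $x$ (which happens whenever $\kappa\ge 1+\beta S_n$, e.g.\ already at $n=1$ for any $\kappa>1$), holding is optimal at every state and the threshold is $+\infty$. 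So the obstacle you anticipated is real, and it cannot be removed by induction on $n$---the inductive step does not convert an infinite threshold at horizon $n$ into a finite one at horizon $n+1$. What you have actually proved is the correct, qualified statement: the optimal finite-horizon policy is of threshold type, and the threshold $m_n$ is finite (indeed $m_n\le n$) whenever $S_n>(\kappa-1)/\beta$. The paper itself implicitly concedes this kind of dependence after Theorem~\ref{theorem:infinite}, where it notes that in the infinite-horizon case the threshold need not be finite; the same caveat is needed here for small horizons with large~$\kappa$.
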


Consider next the infinite horizon, discounted cost.  This cost is defined as 
 $$J^\beta(x) = \lim_{n\rightarrow \infty} J_n^\beta(x).$$

Since $c(x_n,a_n)$ grows linearly $\forall x \in X$, it is also well known (e.g., see \cite{10.2307/2629409}) that an optimal stationary policy exists and  is the unique solution to the functional equation of dynamic programming.   
 \begin{theorem}
	The optimal policy for the infinite horizon discounted cost problem is  of threshold type.
	\label{theorem:infinite}
\end{theorem}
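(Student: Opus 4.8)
The plan is to obtain the infinite-horizon result by passing to the limit from the finite-horizon analysis. The value function is $J^\beta(x) = \lim_{n\to\infty} J_n^\beta(x)$, and by the standard theory of discounted MDPs with linearly growing cost and $0 < \beta < 1$ (see \cite{10.2307/2629409}), this limit exists and is finite for every $x \in X$, and $J^\beta$ is the unique solution of the functional equation of dynamic programming, the infinite-horizon analogue of \eqref{eq:dp2}. These facts I would take as given from the cited results; the work is then to show that the threshold structure of the finite-horizon optimal policy (Theorem \ref{theorem:finite}) survives in the limit.

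The key observation is that every structural property established in Theorem \ref{lemma2} is a weak inequality and is therefore preserved under pointwise limits. First I would transfer convexity: for each $n$ the inequality \eqref{eq:convex1} gives $J_n^\beta(x+1) + J_n^\beta(x-1) \geq 2 J_n^\beta(x)$, and letting $n \to \infty$ (using finiteness of the limits) yields $J^\beta(x+1) + J^\beta(x-1) \geq 2 J^\beta(x)$, so $J^\beta$ is convex. Equivalently, the first differences $J^\beta(x) - J^\beta(x-1)$ are nondecreasing in $x$.

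Next I would read off the threshold from the Bellman equation. The infinite-horizon version of the switching rule \eqref{eq:cases} makes the optimal action at state $x$ depend on the sign of
\[
s(x) - \tfrac{\kappa-1}{\beta}, \qquad s(x) = (1-p)\bigl[J^\beta(x) - J^\beta(x-1)\bigr] + p\bigl[J^\beta(x+1) - J^\beta(x)\bigr],
\]
where holding is optimal when $s(x) \leq (\kappa-1)/\beta$ and dispatching when $s(x) > (\kappa-1)/\beta$ (here I have already combined the four terms of \eqref{eq:cases} using $(1-p)(1-q)+(1-p)q = 1-p$ and $p(1-q)+pq = p$). Since $s(x)$ is a convex combination of the two nondecreasing first differences, it is itself nondecreasing in $x$; hence it crosses the level $(\kappa-1)/\beta$ at most once. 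This gives a threshold $m$ such that holding is optimal for $x \leq m$ and dispatching for $x > m$, which together with Lemma \ref{lemma:dispatch} (covering the platoon-arrival cases) establishes that the optimal stationary policy is of threshold type. For $\kappa \leq 1$ the right-hand side is nonpositive while $s(x) \geq 0$, so dispatching is always optimal, the degenerate threshold $m = 0$.

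The main obstacle I anticipate is not the structural algebra but the justification of the limiting step itself: one must confirm that the optimal policy of the infinite-horizon problem is genuinely the greedy policy with respect to the limit $J^\beta$, i.e. that the minimization and the limit $n \to \infty$ may be interchanged. This is where I would invoke the contraction and uniform-convergence properties of the discounted dynamic-programming operator for $0 < \beta < 1$ supplied by \cite{10.2307/2629409}, which guarantee simultaneously that $J^\beta$ solves the Bellman equation and that acting greedily with respect to it is optimal. Checking that $J^\beta(x)$ is finite for every $x$ (so that all the differences above are well defined) is the prerequisite that the same growth-and-discount hypotheses provide.
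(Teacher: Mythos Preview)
Your proposal is correct and follows precisely the route the paper sets up but does not spell out: the paper states Theorem~\ref{theorem:infinite} without proof, merely noting beforehand that $J^\beta(x)=\lim_{n\to\infty}J_n^\beta(x)$ exists, is finite, and satisfies the Bellman equation by Lippman's results \cite{10.2307/2629409}. Your argument---passing convexity and the monotone first differences of Theorem~\ref{lemma2} through the pointwise limit, then reading off the threshold from the infinite-horizon analogue of \eqref{eq:cases}---is exactly the intended completion, and your simplification of the four-term switching expression to $s(x)=(1-p)[J^\beta(x)-J^\beta(x-1)]+p[J^\beta(x+1)-J^\beta(x)]$ is correct.
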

Note that the theorem does not imply that the threshold is finite; this would depend on (how small) the value of the discount factor $\beta$ is.

We extend Theorem~\ref{theorem:infinite} to the average cost case in the next section.

\section{The Average Cost problem} \label{sec:longrun}

\noindent In the average cost sense, we want to choose the dispatching actions so as to minimize the (long-run) average cost
\begin{align}
	\limsup\limits_{N\rightarrow \infty} \dfrac{1}{N} \mathop{\mathbb{E}}  \sum_{n=1}^{N} c(x_n,a_n).
 \label{eq:costaverage}
\end{align}     

Let $J_{\pi}^{\beta}(x)$, $J_{\pi}$ denote the infinite horizon, discounted cost and long-run average cost incurred by a policy $\pi$. 
It was shown in  \cite{10.2307/2629409}  that, if a policy $\pi$ results in a Markov chain with a single positive recurrent class, both costs are well defined and
\begin{align}
	\lim\limits_{\beta \rightarrow 1^{-}} (1-\beta) J_{\pi}^{\beta}(x) = J_{\pi},
	\label{eq:lippman}
\end{align} for any $x\in X$. 


Define $J_{\pi_m}$ as the average cost of using the threshold policy $\pi_m$. We follow an approach  similar to the one in \cite{1103637}, to prove the optimality of $\pi_m$ for the average cost criterion.

The following lemma shows that this cost is well-defined and finite. 
\begin{lemma}
	For any finite threshold $m\geq 0$, the average cost of the policy $\pi_m$ is finite and given by:
	\label{lemma:m>4}
\begin{align}
J_{\pi_m} =
	\begin{cases}
		   p(1-q) \kappa. ~~~~\text{ if } m = 0. \\ \\
		 \dfrac{p(1-q)}{A+1}+ \dfrac{A(p(1-q)(1+\kappa) +(1-p)(1-q) + pq)}{A+1}, ~~~~\text{ if } m = 1. \\ \\
		  \dfrac{m^2 + m - 2 (\kappa + 1) (p - 1) p}{2 (m + 1)}, ~~\text{ if } m \geq 2 \text{ and } p=q. \\ \\
		 \dfrac{p(1-q)(1-A)}{1-A^{m+1}}+ \dfrac{A^2 (-p + q) + A(1 + p - q) + A^m(-m-p+q) }{(1-A)(1-A^{m+1})} +\vspace{0.2cm} \\ \dfrac{A^{m+1}(m+p-q-1)}{(1-A)(1-A^{m+1})} +(m-q + pq(1-\kappa)+p\kappa)\Big(\dfrac{A^{m} -A^{m+1}}{1-A^{m+1}}\Big),\text{ if } m \geq 2 
		  \text{ and } p\neq q.
\end{cases}
\label{eq:finalJm2}
\end{align} 
where \begin{align}
	&A= \frac{p(1-q)}{(1-p)q}. 
\label{eq:f0q1}
 \end{align}
\end{lemma}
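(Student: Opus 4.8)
The plan is to prove the lemma by reducing the policy $\pi_m$ (combined with Lemma \ref{lemma:dispatch}) to a finite, ergodic Markov chain and then invoking standard Markov‑reward theory. Under $\pi_m$ the queue length $x_n$ is a birth–death chain on the finite set $\{0,1,\dots,m\}$: a truck arrival with no platoon raises the queue by one, except at $x=m$, where the arriving truck is immediately dispatched and the state returns to $m$; a platoon arrival lowers the queue by one whenever it is nonempty; and all remaining events leave the state unchanged. Confinement to $\{0,\dots,m\}$ is the key structural fact. Because $0<p<1$ and $0<q<1$ make both the up‑ and down‑probabilities strictly positive in the interior, the chain is irreducible on a finite state space, hence positive recurrent with a unique stationary law $\pi$. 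This makes \eqref{eq:lippman} applicable and yields $J_{\pi_m}=\sum_{x=0}^{m}\pi_x\,\bar c(x)$, where $\bar c(x)$ is the expected one‑slot cost in state $x$; this already settles finiteness and leaves only the explicit evaluation.

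Next I would compute the stationary distribution by detailed balance. For each $x<m$ the up‑rate out of $x$ is $p(1-q)$ and the down‑rate into it from $x+1$ is $(1-p)q$, so $\pi_{x+1}/\pi_x = A$ with $A$ as in \eqref{eq:f0q1}. Hence $\pi_x=\pi_0 A^x$, and normalizing over $\{0,\dots,m\}$ gives $\pi_0=(1-A)/(1-A^{m+1})$ when $A\neq 1$ (i.e.\ $p\neq q$) and the uniform law $\pi_x=1/(m+1)$ when $A=1$ (i.e.\ $p=q$). This dichotomy is exactly the $p=q$ versus $p\neq q$ split in \eqref{eq:finalJm2}.

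I would then read off the one‑slot costs $\bar c(x)$ from \eqref{eq:cost} by enumerating the four arrival events of Table~\ref{tab:dp} and the action each induces under $\pi_m$ together with Lemma \ref{lemma:dispatch}. For an interior state $1\le x\le m-1$ the holding, platoon‑dispatch, and no‑op contributions combine to $\bar c(x)=x+p-q$. The two boundary states require care: at $x=0$ a platoon finds nothing to dispatch, so $\bar c(0)=p(1-q)$; at $x=m$ a truck arrival forces a \emph{no‑platoon} dispatch carrying the penalty $\kappa$, giving $\bar c(m)=m+p(1-q)\kappa-(1-p)q$. These boundary terms are what produce the $\kappa$‑ and $A^{m}$‑dependent pieces of \eqref{eq:finalJm2}; the degenerate cases $m=0$ (a single state, cost $p(1-q)\kappa$) and $m=1$ (states $\{0,1\}$) then follow by directly reading off $\pi_0\bar c(0)+\pi_1\bar c(1)$.

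Finally I would assemble the answer by substituting $\pi_x=\pi_0 A^x$ into $\sum_{x=0}^m\pi_x\bar c(x)$ and evaluating with the geometric identities $\sum_{x=0}^m A^x=(1-A^{m+1})/(1-A)$ and $\sum_{x=0}^m xA^x=A\,(1-(m+1)A^m+mA^{m+1})/(1-A)^2$; collecting terms gives the stated rational function in $A$ for $p\neq q$, and setting $A=1$ (with $\sum_{x=1}^{m-1}x=m(m-1)/2$) gives the $p=q$ formula. The hard part is not conceptual but bookkeeping: pinning down the two boundary costs $\bar c(0)$ and $\bar c(m)$ exactly — in particular tracking precisely when the $\kappa$ penalty is incurred and when a platoon arrival yields no dispatch — and then carrying the $\sum xA^x$ algebra through cleanly, since small slips there alter exactly the $\kappa$‑ and $m$‑dependent constants appearing in \eqref{eq:finalJm2}.
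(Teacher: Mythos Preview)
Your proposal is correct and follows essentially the same route as the paper: both compute the stationary distribution of the birth--death chain induced by $\pi_m$, obtain $f(x)=A^x f(0)$ on $\{0,\dots,m\}$, normalize, and then take the expectation of the one-slot cost under that distribution (your boundary costs $\bar c(0)=p(1-q)$ and $\bar c(m)=m+p(1-q)\kappa-(1-p)q$ match the paper's coefficients exactly). Your use of detailed balance and the direct observation that the chain is confined to $\{0,\dots,m\}$ is a bit cleaner than the paper's treatment, which writes out the global balance equations, proves the geometric form by a separate induction (Lemma~\ref{lemma:fx}), and explicitly verifies $f(x)=0$ for $x>m$; but the substance is the same.
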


\begin{proof}
Let $f(x)$ be the steady-state probability of the state $x\in X$ when $\pi_m$ is applied. ($\pi_m$ has a unique steady-state distribution since, under $\pi_m$, (i) the queue will never drift to the infinity and (ii) the state $x=0$ can be reached from every state in $X$.) $f(x)$ is defined as:  \begin{align}
	 f(x) &= (1-p)(1-q) f(\pi_m(Z(x))) + (1-p)qf(\pi_m(P(x)))+ p(1-q)f(\pi_m(V(x))) + pqf(\pi_m(B(x))). \label{eq:steadydef}
\end{align}
The state transition diagram using policy $\pi_m$ and the definition \eqref{eq:steadydef} is shown in Figure \ref{fig:stateT}.
\begin{figure}[h]
\begin{subfigure}{1\textwidth}
  \centering
  \includegraphics[width=.45\linewidth]{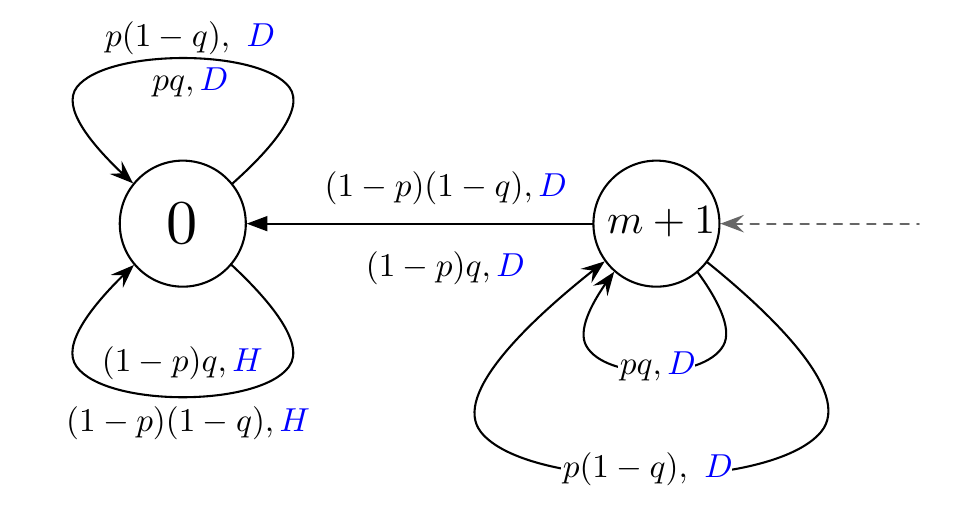}
  \caption{For $m = 0$.}
  \hfill
\end{subfigure}

\begin{subfigure}{1\textwidth}
  \centering
  \includegraphics[width=.65\linewidth]{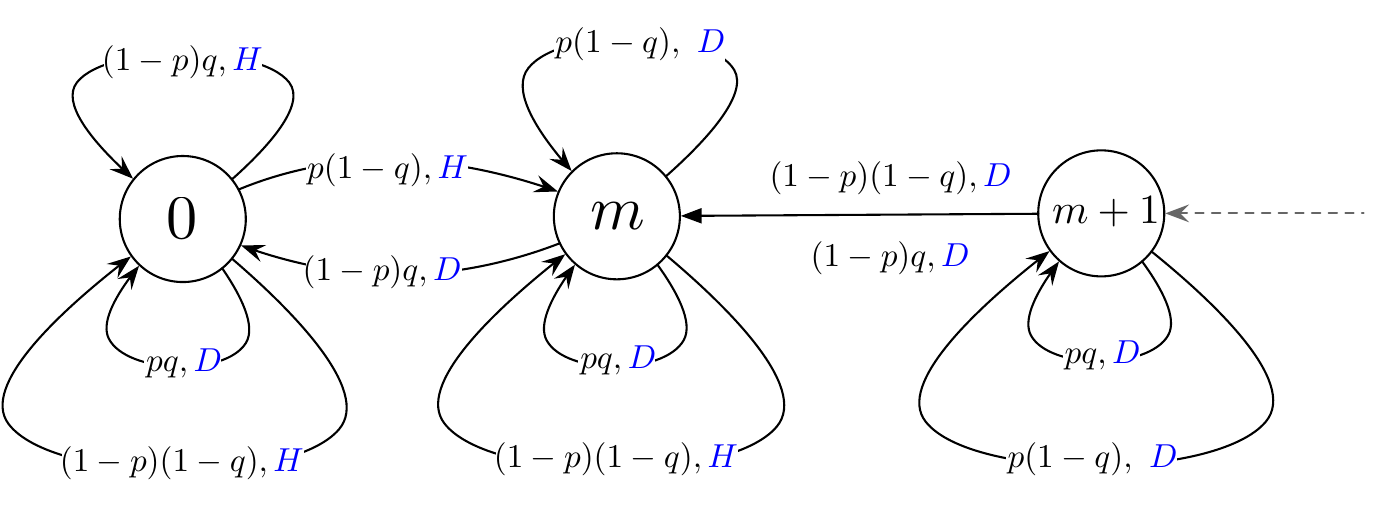}
  \caption{For $m = 1$.}
  \hfill
\end{subfigure}%

\begin{subfigure}{1\textwidth}
  \centering
  \includegraphics[width=1\linewidth]{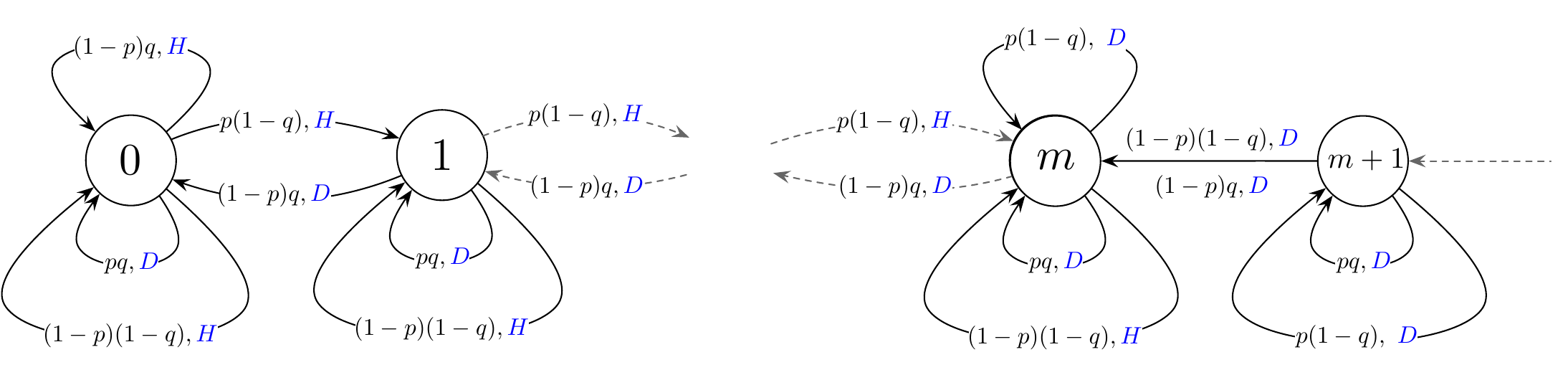}
\caption{For $m\geq 2$.}
\end{subfigure}
\caption{State transition diagram resulting from a threshold policy with threshold $m$.}
\label{fig:stateT}
\end{figure}

Using the definition of $\pi_m$ and \eqref{eq:steadydef}, we derive for $m\geq 2$ 
\begin{align}
\text{if ~}	x=0,\hspace{1.65cm}&f(0) = ((1-p)q+pq+(1-p)(1-q))f(0) + ((1-p)q)f(1). \label{eq:p0} \\
\text{if ~}	x\in[1,m-1],\hspace{.42cm}&\notag f(x) = (p(1-q))f(x-1) + ((1-p)(1-q)+pq)f(x) \\&\hspace{4.25cm}+((1-p)q + (1-p)(1-q))f(x+1). \label{eq:pt-20} \\
	\text{if ~}	x=m,\hspace{1.51cm}&\notag f(m) = (p(1-q))f(m-1) + ((1-p)(1-q)+pq+p(1-q))f(m) \\ &\hspace{4.45cm}+((1-p)q + (1-p)(1-q))f(m+1). \label{eq:pt-1} \\
	\text{if ~}	x> m,\hspace{1.52cm}& f(x) = (p(1-q) + pq)f(x) + ((1-p)q + (1-p)(1-q))f(x+1) \label{eq:px}.
\end{align}
It is more convenient to rewrite \eqref{eq:p0}, \eqref{eq:pt-20},  and \eqref{eq:pt-1} as:
\begin{align}
	&f(0) = (1 - p(1-q))f(0) + ((1-p)q)f(1). \label{eq:p01} \\
	&f(x) = (p(1-q))f(x-1) + (1 - p(1-q) - (1-p)q)f(x)+((1-p)q)f(x+1). \label{eq:pt-2} \\
	&f(m) = (p(1-q))f(m-1) + (1-q+pq)f(m) + (1-p)f(m+1). \label{eq:pt-12} 
\end{align}
Using \eqref{eq:p01}, we deduce $f(1)$ as:
\begin{align}
	\notag f(1) &= \frac{-(1 - p(1-q))f(0) + f(0)}{(1-p)q} \\
	\notag &= \frac{-f(0) + (p(1-q))f(0) + f(0)}{(1-p)q} \\
	 &= \bigg(\frac{p(1-q)}{(1-p)q}\bigg)f(0). \label{eq:p1} 
\end{align}
From \eqref{eq:pt-2} and \eqref{eq:p1}, we deduce $f(2)$:
\begin{align}
	f(1) \notag &= (p(1-q))f(0) + (1 - p(1-q) - (1-p)q)f(1)+((1-p)q)f(2). \\
	f(2) &= \frac{f(1) -(p(1-q))f(0) + (-1 + p(1-q) + (1-p)q)f(1)}{(1-p)q}.  \label{eq:p2}
\end{align}
We cancel the terms $-f(1) + f(1)$ to derive
\begin{align}
\notag	f(2)&= \frac{f(1) -f(1)  -(p(1-q))f(0) + (p(1-q) + (1-p)q)f(1)}{(1-p)q}. \\
\notag	&= -\bigg(\frac{p(1-q)}{(1-p)q}\bigg)f(0) + \bigg(\frac{p(1-q) + (1-p)q}{(1-p)q}\bigg)f(1). \\
\notag	&= -\bigg(\frac{p(1-q)}{(1-p)q}\bigg)f(0) + \bigg(\frac{p(1-q)}{(1-p)q}+{1}\bigg)f(1)
\end{align}
Replacing $f(1)$ by \eqref{eq:p1}, we obtain:
\begin{align}
	\notag f(2)&= -\bigg(\frac{p(1-q)}{(1-p)q}\bigg)f(0) + \bigg(\frac{p(1-q)}{(1-p)q}+{1}\bigg)\bigg(\frac{p(1-q)}{(1-p)q}\bigg)f(0).   \\
	&=  \bigg(\frac{p(1-q)}{(1-p)q}\bigg)\bigg(\frac{p(1-q)}{(1-p)q}\bigg)f(0). \label{eq:p22}
\end{align}
Define $A$ as
\begin{align}
A &= \frac{p(1-q)}{(1-p)q}.
\end{align}
We replace $A$ in \eqref{eq:p22} (as it will repeat quite often)
\begin{align}
	f(2)&= A^2f(0).  \label{eq:p23}
\end{align}
By induction (see Lemma \ref{lemma:fx}), we rewrite \eqref{eq:pt-2} for all $x \in [1,m]$ as:
\begin{align}
	f(x)&= A^xf(0).  \label{eq:pt-21}
\end{align}
We now derive the result for $x>m$.
We can simplify \eqref{eq:px} as
\begin{align}
	\notag f(x) &= (p(1-q) + pq)f(x) + ((1-p)q + (1-p)(1-q))f(x+1). \\
	\notag &= (p-pq+pq)f(x)+(q-pq  + 1-p-q+pq)f(x+1). \\
	\notag &= pf(x) + (1-p)f(x+1). \\
	\notag (1-p)f(x) &= (1-p)f(x+1). \\
	f(x) &= f(x+1). \label{eq:px-sim}
\end{align} 
The result in \eqref{eq:px-sim}
 holds for all $x > m$. We now derive $f(m)$ where $x=m$ using \eqref{eq:pt-12}, we also solve \eqref{eq:px-sim}.
\begin{align}
	\notag f(m) &= (p(1-q))f(m-1) + (1-q+pq)f(m) + (1-p)f(m+1). \\
	\notag f(m+1)&= \frac{-(p(1-q))f(m-1) + ((1-p)q)f(m)}{1-p}. \\
	\notag &= -\frac{p(1-q)}{1-p}f(m-1) + qf(m). \\
	  \notag &= -\frac{p(1-q)}{1-p}A^{m-1}f(0) + q\bigg(\frac{p(1-q)}{(1-p)q}\bigg) A^{m-1}f(0). \\
	  \notag &= -\frac{p(1-q)}{1-p}A^{m-1}f(0) + \frac{p(1-q)}{(1-p)} A^{m-1}f(0). \\
	 &=0 \label{eq:pxderived}
\end{align}
The result in \eqref{eq:pxderived} holds for all $x > m$. We collect all the results as follows
\begin{align}
	\text{ if }& 1\leq x \leq m, \hspace{1.19cm} f(x)= A^xf(0). \\
	\text{ if }& x> m,  \hspace{1.83cm}  f(x)= 0.
\end{align}
We obtain $f(0)$ from the condition that the sum of the probabilities is unity.
\begin{align}
f(0) =
\begin{cases}
 	\dfrac{A-1}{A^{m+1}-1} ~~~~\text{if}~~p\neq q. \\ \\
	\dfrac{1}{m+1}. ~~~~\text{if}~~p= q. 
\end{cases}
\label{eq:f0q}
\end{align} 

Let the $J_{\pi_m}$ be the cost of using policy $\pi_m$ to govern the platooning system. 
\begin{align}
J_{\pi_m} = \sum_{x} c(x,a)f(x). \label{eq:steadycost}
\end{align}
Since $c(x,a) \in \{x,x-1,x-1+\kappa\}$ for all $x \in\{1,m\}$, we obtain from Equation \eqref{eq:steadycost}
\begin{align}
J_{\pi_m} =
	\begin{cases}
		\notag  p(1-q)f(0) + \bigg\{ \sum\limits^{m-1}_{x\,=\,1} (p - q + x)A^xf(0) \bigg\} \,+ \\ -A^m(q(m - 1)(p - 1) - m(pq + (p - 1)(q - 1)) + p(\kappa + m)(q - 1))f(0). ~~~~\text{ if }p \neq q.
\\
\\
p(1-p)f(0) + \bigg\{ \sum\limits^{m-1}_{x\,=\,1} xA^xf(0) \bigg\} \,+ \\ -A^m(p(m - 1)(p - 1) - m(p^2 + (p - 1)^2) + p(\kappa + m)(p - 1))f(0). ~~~~\text{ if }p = q.   	
\end{cases}
\end{align}

Or the equivalent after some algebraic manipulation and simplification
\begin{align}
J_{\pi_m} =
\begin{cases}
\displaystyle              
\dfrac{p(1-q)(1-A)}{1-A^{m+1}}+ \dfrac{A^2 (-p + q) + A(1 + p - q) + A^m(-m-p+q) }{(1-A)(1-A^{m+1})} +\vspace{0.2cm} \\ \dfrac{A^{m+1}(m+p-q-1)}{(1-A)(1-A^{m+1})} +(m-q + pq(1-\kappa)+p\kappa)\Big(\dfrac{A^{m} -A^{m+1}}{1-A^{m+1}}\Big). ~~~\text{ if }p \neq q.\\ \\
 \dfrac{m^2 + m - 2 (\kappa + 1) (p - 1) p}{2 (m + 1)}. ~~~\text{ if }p = q.
	\end{cases}
	\label{eq:finalJm}
\end{align}

For $m=0$,
\noindent we use \eqref{eq:steadydef} to derive 
\begin{align}
\text{if ~}	x=0,\hspace{1.65cm}&f(0) = 1. \label{eq:pm=00} \\
	\text{if ~}	x> m,\hspace{1.52cm}& f(x) = 0 \label{eq:pm=0x}.
\end{align}
Then, $J_{\pi_0}$ is calculate as follows
\begin{align}
J_{\pi_0} = p(1-q) \kappa. \end{align}

For $m=1$, we use \eqref{eq:steadydef} to derive 
\begin{align}
\text{if ~}	x=0,\hspace{1.65cm}&f(0) = (1 - p(1-q))f(0) + ((1-p)q)f(m). \label{eq:pm=10} \\
	\text{if ~}	x=m,\hspace{1.51cm}& f(m) = (p(1-q))f(0) + (1 - (1-p)q)f(m) +(1-p)f(m+1). \label{eq:pm=1t-1} \\
	\text{if ~}	x> m,\hspace{1.52cm}& f(x) = (p(1-q) + pq)f(x) + ((1-p)q + (1-p)(1-q))f(x+1) \label{eq:pm=1x}.
\end{align}
We deduce $f(m)$ using similar calculations to \eqref{eq:p01} as:
\begin{align}
	f(m) = \frac{p(1-q)}{(1-p)q} f(0). \label{eq:pm=11} 
\end{align}
It is easy to see that \eqref{eq:pm=1t-1} is similar to \eqref{eq:pt-21} so,
\begin{align}
	f(m) = Af(0).
	\label{eq:pm=1t-21}
\end{align}
Finally, for $x>m$, we the same calculations as in  \eqref{eq:pxderived}. 
We collect all the results as follows
\begin{align}
	\text{ if }& x = m, \hspace{1.19cm} f(x)= Af(0). \\
	\text{ if }& x> m,  \hspace{1.18cm}  f(x)= 0.
\end{align}
We again obtain $f(0)$ from the condition that the sum of the probabilities is unity.
\begin{align*}
	f(0) = \dfrac{1}{A+1}.
\end{align*}
Then, $J_{\pi_1}$ is calculate as follows
\begin{align}
	J_{\pi_1} =\dfrac{p(1-q)}{A+1}+ \dfrac{A(p(1-q)(1+\kappa) +(1-p)(1-q) + pq)}{A+1}.
\end{align}
\end{proof}

\begin{lemma}
 $f(x) = A^x f(0),$ for all $x \in [1,m-1]$.
 \label{lemma:fx}
\end{lemma}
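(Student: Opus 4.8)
The plan is to prove the identity by induction on $x$, exploiting the fact that the balance equation \eqref{eq:pt-2} is a second-order linear recurrence. First I would rearrange \eqref{eq:pt-2}: collecting the $f(x)$ terms and dividing by the coefficient $(1-p)q$ of $f(x+1)$ converts it into the homogeneous recurrence
\begin{align*}
f(x+1) = (A+1)f(x) - A\,f(x-1),
\end{align*}
valid for every $x \in [1,m-1]$, where $A = \frac{p(1-q)}{(1-p)q}$ as in \eqref{eq:f0q1}. This is the form on which the induction will run, and it makes clear that two consecutive anchors are required to propagate the solution.

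Next I would record the two base cases, both already established in the proof of Lemma~\ref{lemma:m>4}: Equation~\eqref{eq:p1} gives $f(1) = A f(0)$, and Equation~\eqref{eq:p23} gives $f(2) = A^2 f(0)$. Because the recurrence is of second order, having these two anchors is exactly what is needed to launch the argument.

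For the inductive step I would assume $f(x-1) = A^{x-1} f(0)$ and $f(x) = A^{x} f(0)$ and substitute directly into the recurrence, obtaining
\begin{align*}
f(x+1) = (A+1)A^{x} f(0) - A\,A^{x-1} f(0) = \big[(A+1)-1\big]A^{x} f(0) = A^{x+1} f(0),
\end{align*}
which is the claim for $x+1$. Iterating this up to $x=m-1$ yields $f(x) = A^{x} f(0)$ throughout the stated range.

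Since the computation is entirely mechanical, there is no genuine analytic difficulty here; the only point that needs care is bookkeeping of the index range. Specifically, I would verify that the interior recurrence \eqref{eq:pt-2}, which descends from \eqref{eq:pt-20}, is the equation that governs each state fed into the induction, and that the boundary states $x=0$ and $x=m$ — which obey the different balance equations \eqref{eq:p01} and \eqref{eq:pt-12} — are never inadvertently substituted into it. With this range check in place, the induction closes and the lemma follows.
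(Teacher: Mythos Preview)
Your proof is correct and follows the same inductive approach as the paper. In fact, your version is slightly more careful: you explicitly rewrite the balance equation as the clean second-order recurrence $f(x+1)=(A+1)f(x)-Af(x-1)$ and supply two anchors, whereas the paper states only the single base case $f(1)=Af(0)$ yet tacitly uses both $f(m-2)=A^{m-2}f(0)$ and $f(m-3)=A^{m-3}f(0)$ in its inductive step.
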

\begin{proof} The proof is by induction. 

\textbf{Base case ($x=1$).} This proof is immediate by \eqref{eq:p1}.	

\textbf{Inductive step (from $x=m-2$ to $x=m-1$).}  Assume that  \begin{align}
	f(m-2) = A^{m-2} f(0),\label{eq:pm-2}
\end{align} is valid for $x=m-2$. We will prove the same is true for $x=m-1$, i.e., $$f(m-1) = A^{m-1} f(0).$$
From \eqref{eq:pm-2}, we have that, for $x=m-2$
	\begin{align}
		 f(m-2) &= (p(1-q))f(m-3) + (1 - p(1-q) - (1-p)q)f(m-2)+((1-p)q)f(m-1).  
	\end{align}
Using similar arguments as in \eqref{eq:p22}, we obtain
	\begin{align*}
		f(m-1)& = -\frac{p(1-q)}{(1-p)q}f(m-3) + \frac{(p(1-q) + (1-p)q)}{((1-p)q)}f(m-2). \\
		  &=  -\frac{p(1-q)}{(1-p)q}f(m-3) + \frac{p(1-q)}{(1-p)q}f(m-2) + \frac{(1-p)q}{(1-p)q}f(m-2) \\
		  &= -\frac{p(1-q)}{(1-p)q}f(m-3) + f(m-2) + \frac{p(1-q)}{(1-p)q}f(m-2)
	\end{align*}
Substituting $f(m-2)$ and  $f(m-3)$, we obtain 
\begin{align}
		\notag f(m-1)& = -\frac{(p(1-q))}{((1-p)q)}A^{m-3}f(0) + A^{m-2}f(0) + \frac{p(1-q)}{(1-p)q}A^{m-2}f(0) \\
		\notag &= -AA^{m-3}f(0) + A^{m-2}f(0) + \frac{p(1-q)}{(1-p)q}A^{m-2}f(0).\\
		\notag  &= AA^{m-2}f(0). \\
		 &= A^{m-1}f(0). \label{eq:finalrlemma}
	\end{align}
	From \eqref{eq:finalrlemma} we see that $f(x)=A^xf(0)$ for all $x \in [1,m-1]$. 
\end{proof}

We next investigate the asymptotic behaviour of \eqref{eq:finalJm2}.
\begin{lemma}
	For $p\geq q$, it follows from \eqref{eq:finalJm2} that 	\begin{align}
		\lim_{m\rightarrow\infty} J_{\pi_m} = \infty.
		\label{eq:asymptotic1}
	\end{align}
	For $p<q$, it follows from \eqref{eq:finalJm2} that
	\begin{align}
		&\lim\limits_{m\rightarrow\infty} J_{\pi_m} = p(1-q)(1-A) + \frac{A^2 (-p + q) + A (1 + p - q)}{(1-A)} 
		\label{eq:asymptotic2}
	\end{align}
	\label{lemma:asymptotic}
\end{lemma}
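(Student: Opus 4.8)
The plan is to read both limits directly off the closed forms in \eqref{eq:finalJm2}, after first recording the elementary sign fact that organizes the three cases. Since $A-1 = \frac{p(1-q)-(1-p)q}{(1-p)q} = \frac{p-q}{(1-p)q}$ and the denominator $(1-p)q$ is strictly positive, $A-1$ has the same sign as $p-q$; thus $A<1$, $A=1$, $A>1$ correspond respectively to $p<q$, $p=q$, $p>q$. The boundary case $p=q$ (that is, $A=1$) is then immediate from the middle branch of \eqref{eq:finalJm2}: its numerator $m^2+m-2(\kappa+1)(p-1)p$ is quadratic in $m$ while its denominator $2(m+1)$ is linear, so $J_{\pi_m}\sim m/2\to\infty$, which already covers part of the $p\geq q$ claim in \eqref{eq:asymptotic1}.

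For $p<q$, i.e. $0<A<1$, I would take the limit of the four-term branch of \eqref{eq:finalJm2} term by term, using the standard facts $A^m\to 0$ and $mA^m\to 0$ (exponential decay beats polynomial growth) together with $A^{m+1}\to 0$. The first term tends to $p(1-q)(1-A)$; in the second term the summand $A^m(-m-p+q)$ vanishes and the denominator $(1-A)(1-A^{m+1})\to 1-A$, leaving $\frac{A^2(-p+q)+A(1+p-q)}{1-A}$; the third term carries a factor $A^{m+1}(m+p-q-1)\to 0$; and the last term carries $\frac{A^m-A^{m+1}}{1-A^{m+1}}\sim A^m(1-A)\to 0$ even after multiplication by the linearly growing factor $m-q+pq(1-\kappa)+p\kappa$. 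Summing the two surviving pieces reproduces \eqref{eq:asymptotic2} exactly.

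For $p>q$, i.e. $A>1$, lies the delicate part and where I expect the real work to be. Now $A^{m+1}\to\infty$, so $1-A^{m+1}\sim -A^{m+1}$, and three of the four terms contain an explicit factor of $m$ and diverge individually with competing signs. I would extract the leading behavior of each by dividing numerator and denominator by $A^{m+1}$: the second term behaves like $\frac{m}{(1-A)A}$, the third like $\frac{m}{A-1}$, and the last like $m\cdot\frac{A-1}{A}$, while the first term tends to $0$. The crux is to verify these linear contributions do not cancel: adding the three coefficients of $m$ over the common denominator $A(1-A)$ gives $\frac{1-A-(1-A)^2}{A(1-A)}=\frac{A-A^2}{A(1-A)}=1$. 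Since this net coefficient is $1\neq 0$, we conclude $J_{\pi_m}\sim m\to\infty$, completing the $p\geq q$ case and hence \eqref{eq:asymptotic1}.

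The main obstacle is exactly this cancellation check in the $A>1$ regime: each of the three $m$-proportional terms blows up, and only their precise combination reveals the net linear growth with coefficient $1$; keeping the sign of $1-A$ correct throughout (note $1-A<0$ here) is the point most likely to derail the algebra. A robust sanity check I would keep in mind is the probabilistic one: under $\pi_m$ with $A>1$ the stationary law $f(x)=A^xf(0)$ is geometrically increasing and concentrates near the threshold, so the mean queue length, and therefore the cost, which is at least $x-1-|\kappa|$ in every state, must diverge with $m$, independently confirming \eqref{eq:asymptotic1}.
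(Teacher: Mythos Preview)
Your proposal is correct and follows the same approach as the paper---reading both limits directly off the closed forms in \eqref{eq:finalJm2} by tracking powers of $A$. In fact your argument is more complete: the paper's proof does not separate out the $p=q$ branch (where $A=1$, not $A>1$) and, for $A>1$, merely asserts that ``the term $mA^{m+1}$ will be sufficiently larger than $m$'' without the cancellation check you carry out, so your verification that the three $O(m)$ contributions combine with net coefficient $1$ is a genuine tightening of the paper's sketch.
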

\begin{proof}
	For $p\geq q$, $A>1$ so the term $mA^{m+1}$ in \eqref{eq:finalJm2} as $m\rightarrow\infty$ will be sufficiently larger than $m$. For $p<q$, $A<1$ so the terms  $A^{m}$ and $A^{m+1}$ in \eqref{eq:finalJm2} as $m\rightarrow\infty$ will become insignificant.
\end{proof}

\begin{lemma}
	There exists a finite $m$ such that $J_{\pi_m}< J_{\pi_{m+1}}$.
	\label{lemma:m}
\end{lemma}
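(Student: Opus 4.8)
The plan is to show that the sequence $\{J_{\pi_m}\}_{m\geq 0}$ cannot be non-increasing in $m$, since "no finite up-step exists" means precisely that $J_{\pi_m}\geq J_{\pi_{m+1}}$ for every finite $m$. I would split along the dichotomy already supplied by Lemma~\ref{lemma:asymptotic}, treating the divergent and convergent regimes separately, because the mechanism forcing an up-step differs in each.

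First, the case $p\geq q$. Here \eqref{eq:asymptotic1} gives $\lim_{m\to\infty}J_{\pi_m}=\infty$. If the sequence were non-increasing for all $m$, it would be bounded above by its first term $J_{\pi_0}=p(1-q)\kappa<\infty$, contradicting divergence. Hence there is a finite $m$ with $J_{\pi_m}<J_{\pi_{m+1}}$ (in fact infinitely many).

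Next, the convergent case $p<q$, where \eqref{eq:asymptotic2} gives a finite limit $L:=\lim_{m\to\infty}J_{\pi_m}$. The crucial step is to exhibit a single finite threshold $m_0$ with $J_{\pi_{m_0}}<L$. To do this I would subtract \eqref{eq:asymptotic2} from the $p\neq q$ branch of \eqref{eq:finalJm2}, factor out $A^{m}$, and collect the $m$-dependent part of the resulting bracket. Since $A=\tfrac{p(1-q)}{(1-p)q}<1$ exactly when $p<q$, the coefficient of $mA^{m}$ simplifies (after the $1/(1-A)$ factors cancel) to $-A$, so that $J_{\pi_m}-L=A^{m}\big(C-Am+o(1)\big)$ for a bounded constant $C$. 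Because $-Am\to-\infty$ while $A^{m}>0$, the bracket is negative for all large $m$; thus $J_{\pi_m}<L$ for every sufficiently large $m$, i.e.\ the finite-threshold cost approaches its limit from below, which yields the desired $m_0$.

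Finally I would close the $p<q$ case with the same contradiction, localized past $m_0$: since $J_{\pi_{m_0}}<L=\lim_m J_{\pi_m}$, the sequence must at some later index $m_1>m_0$ exceed $J_{\pi_{m_0}}$, so on the block $\{m_0,\dots,m_1\}$ it cannot decrease at every step, forcing an up-step $J_{\pi_m}<J_{\pi_{m+1}}$. I expect the main obstacle to be the $p<q$ computation, namely verifying that the $mA^{m}$-coefficient collapses to $-A$ through the cancellation of the $1/(1-A)$ terms; everything else is bookkeeping together with an elementary monotone-sequence argument. It is also worth recording explicitly that $J_{\pi_0}=p(1-q)\kappa$ is finite, so that the bound used in the $p\geq q$ case is legitimate.
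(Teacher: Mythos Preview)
Your proposal is correct and, for the case $p\geq q$, coincides with the paper's argument (divergence to $\infty$ forbids a non-increasing sequence bounded by $J_{\pi_0}$).

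For $p<q$ the two proofs diverge. The paper computes the forward difference $J_{\pi_{m+1}}-J_{\pi_m}$ directly, groups the terms against constants $B,C,D$, and argues (somewhat informally, via a limit $E/F\to0$) that the non-constant block eventually dominates, forcing the difference to be positive for large $m$. You instead compare $J_{\pi_m}$ to the limit $L$ of \eqref{eq:asymptotic2}: extracting the $m$-dependent part of \eqref{eq:finalJm2} you find the leading behaviour $J_{\pi_m}-L\sim -A\,m\,A^{m}$, hence $J_{\pi_{m_0}}<L$ for some (in fact all large) $m_0$; convergence $J_{\pi_m}\to L$ then supplies $m_1>m_0$ with $J_{\pi_{m_1}}>J_{\pi_{m_0}}$, and telescoping on $\{m_0,\dots,m_1\}$ yields an up-step. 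Your identification of the $mA^{m}$-coefficient as $-A$ is correct: the two $m$-carrying pieces of \eqref{eq:finalJm2} contribute $-mA^{m}/(1-A^{m+1})$ and $m A^{m}(1-A)/(1-A^{m+1})$ respectively, which sum to $-A\,mA^{m}/(1-A^{m+1})$. The paper's route gives slightly more, namely the sign of the \emph{consecutive} difference for large $m$, at the price of a heavier and less transparent calculation; your route is shorter and conceptually cleaner, needing only a single index below the limit and an elementary monotone-sequence contradiction.
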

\begin{proof}
For $p \geq q$, the result follows by Lemma \ref{lemma:asymptotic}. For $p < q$, we form the  difference $J_{\pi_{m+1}}-J_{\pi_{m}}$ and find
\begin{align}
	\notag &J_{\pi_{m+1}}-J_{\pi_{m}}= \Bigg[\frac{B}{1-A^{m+2}}+  \frac{C + A^{m+1}(-(m+1)-p+q) + A^{m+2}((m+1)+p-q-1)}{(1-A)(1-A^{m+2})}  + \\ \notag &   \hspace{5cm}  (D+m+1)\Bigg(\dfrac{A^{m+1} -A^{m+2}}{1-A^{m+2}}\Bigg)\Bigg]\,- \\ \notag &\hspace{2.6cm} \Bigg[\frac{B}{1-A^{m+1}}+  \frac{C + A^{m}(-m-p+q) + A^{m+1}(m+p-q-1)}{(1-A)(1-A^{m+1})} + \\  & \hspace{5cm} (D+m)\Bigg(\dfrac{A^{m} -A^{m+1} }{1-A^{m+1}}\Bigg)\Bigg], 
	\label{eq:thedifferences0}
\end{align}
where $B$, $C$, $(D+m)$, and $(D+m+1)$  are positive, and
\begin{align*}
	&B = p(1-q)(1-A). &~~~~~~C=A^2 (-p + q) + A(1 + p - q). &~~~~~~D=-q + pq(1-\kappa)+p\kappa.
\end{align*}
We rewrite \eqref{eq:thedifferences0} by grouping identical elements as
\begin{align}
\notag	&J_{\pi_{m+1}}-J_{\pi_{m}}= \\&\Bigg(\frac{A^{m+2}((m+1)+p-q-1)}{(1-A)(1-A^{m+2})} - \frac{A^{m+1}(m+p-q-1)}{(1-A)(1-A^{m+1})} \Bigg) + \notag \textcolor{black}{B}\Bigg(\frac{1}{1-A^{m+2}} - \frac{1}{1-A^{m+1}}\Bigg) \\& \notag  \Bigg(\frac{A^{m+1}(-(m+1)-p+q)}{(1-A)(1-A^{m+2})}  - \frac{A^{m}(-m+-p+q)}{(1-A)(1-A^{m+1})} \Bigg)  +\textcolor{black}{C}\Bigg(\frac{1}{(1 -A)(1-A^{m+2})} - \frac{1}{(1-A)(1-A^{m+1})}\Bigg) \notag \\
	&\hspace{0cm}\Bigg(\dfrac{ (m+1)(A^{m+2} -A^{m+1})}{1-A^{m+2}} - \dfrac{ m(A^{m+1} -A^{m})}{1-A^{m+1}}\Bigg) + \textcolor{black}{D}\Bigg(\dfrac{A^{m+2} -A^{m+1}}{1-A^{m+2}}-\dfrac{A^{m+1} -A^m}{1-A^{m+1}}\Bigg).
\label{eq:thedifferences01}
\end{align}
From \eqref{eq:thedifferences01}, it follows that the multiplication of the constant terms ($B$, $C$, and $D$) will be negative. We then rewrite \eqref{eq:thedifferences01}, and all we have to show is that for a given $m$ the following inequality holds
\begin{align}
&\notag \Bigg|\frac{A^{m+1}(-(m+1)-p+q)}{(1-A)(1-A^{m+2})}  - \frac{A^{m}(-m+-p+q)}{(1-A)(1-A^{m+1})}  + \dfrac{ (m+1)(A^{m+2} -A^{m+1})}{1-A^{m+2}} - \dfrac{ m(A^{m+1} -A^{m})}{1-A^{m+1}}\\  &\frac{A^{m+2}((m+1)+p-q-1)}{(1-A)(1-A^{m+2})} - \frac{A^{m+1}(m+p-q-1)}{(1-A)(1-A^{m+1})} \Bigg| > \notag 
	\Bigg|\textcolor{black}{B}\Big(\frac{1}{1-A^{m+2}} - \frac{1}{1-A^{m+1}}\Big) + \\& \textcolor{black}{C}\Big(\frac{1}{(1-A)(1-A^{m+2})} - \frac{1}{(1-A)(1-A^{m+1})}\Big) +  
	\textcolor{black}{D}\Big(\dfrac{A^{m+2} -A^{m+1}}{1-A^{m+2}}-\dfrac{A^{m+1} -A^m}{1-A^{m+1}}\Big)\Bigg|.
	\label{eq:test}
\end{align} 
From \eqref{eq:test}, it follows that the summation on the right side vanishes as $m$ increases since $B$, $C$, and $D$ are constants, and the limit as the corresponding multiplications tends to infinity is $0$. The condition is $\kappa$ be constant and $\kappa<\infty$. We can use the left side of \eqref{eq:test} to show that $$\lim_{m\rightarrow\infty} \dfrac{E}{F} = 0,$$ 
where $$E=\frac{A^{m+1}(-(m+1)-p+q) + A^{m+2}((m+1)+p-q-1)}{(1-A)(1-A^{m+2})} + \dfrac{ (m+1)(A^{m+2} -A^{m+1})}{1-A^{m+2}} $$ and $$F= \frac{A^{m}(-m-p+q) + A^{m+1}(m+p-q-1)}{(1-A)(1-A^{m+1})} + \dfrac{ m(A^{m}-A^{m+1} )}{1-A^{m+1}} .$$
 Therefore, the left side of \eqref{eq:test} will be greater than the right side for an $m$ large enough. 

Then, for some $m$ large enough, we will find $J_{\pi_{m}}<J_{\pi_{m+1}}$.
\end{proof}
As in \cite{1103637},
using Lemmas \ref{lemma:m>4} and \ref{lemma:m} and Lippman's results (since the sufficient conditions of \cite[Corollary 1]{10.2307/2629409} are satisfied) we next show that the limit as $\beta\rightarrow1^-$ of a sequence of threshold policies is a threshold policy again.

 \begin{theorem}	
The optimal policy for the average cost problem is of threshold type with a finite threshold.
\label{theorem:average}
\end{theorem}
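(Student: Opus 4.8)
The plan is to transfer the threshold structure already established for the discounted problem to the average cost criterion via the vanishing discount method of Lippman \cite{10.2307/2629409}. First I would invoke Theorem~\ref{theorem:infinite}: for each discount factor $\beta\in(0,1)$ the optimal infinite-horizon discounted policy is a threshold policy $\pi_{m(\beta)}$. Because the instantaneous cost \eqref{eq:cost} grows only linearly in $x$, and because every threshold policy $\pi_m$ induces a Markov chain with a single positive recurrent class (as verified inside the proof of Lemma~\ref{lemma:m>4}, using that the queue never drifts to infinity and that state $0$ is reachable from every state), the sufficient conditions of \cite[Corollary~1]{10.2307/2629409} are met. Hence a stationary average-cost optimal policy exists and arises as a limit point of the discounted-optimal policies $\pi_{m(\beta)}$ as $\beta\to 1^-$, and by \eqref{eq:lippman} the optimal average cost equals $\inf_m J_{\pi_m}$, the infimum being taken over the (only relevant) candidate class of threshold policies.

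It then remains to show that $\inf_m J_{\pi_m}$ is attained at a finite index $m^{*}$, which is precisely what makes the limiting threshold finite. By Lemma~\ref{lemma:m>4}, $J_{\pi_m}$ is finite for every finite $m$, so $m\mapsto J_{\pi_m}$ is a genuine real-valued sequence on $\{0,1,2,\dots\}$, and I would split according to the two regimes of Lemma~\ref{lemma:asymptotic}. For $p\geq q$ we have $J_{\pi_m}\to\infty$, so the infimum of a finite-valued sequence diverging to infinity is necessarily attained at some finite index. For $p<q$ the sequence converges to the finite limit \eqref{eq:asymptotic2}; here I would lean on the proof of Lemma~\ref{lemma:m}, which shows $J_{\pi_{m+1}}-J_{\pi_m}>0$ for all $m$ large enough, i.e. the sequence is eventually strictly increasing. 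Thus there is an index $M$ beyond which $\{J_{\pi_m}\}$ increases monotonically toward its limit, so the minimum over $m\geq M$ equals $J_{\pi_M}$; comparing this with the finitely many earlier values $J_{\pi_0},\dots,J_{\pi_{M-1}}$ yields a finite minimizer $m^{*}$.

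Finally, identifying this finite minimizer $m^{*}$ with the limiting threshold of the discounted problem gives that $\pi_{m^{*}}$ minimizes \eqref{eq:costaverage}, which establishes the claim. I expect the main obstacle to be the rigorous justification of the limit step itself: one must argue that the discounted thresholds $m(\beta)$ stay bounded as $\beta\to 1^-$ and that their limit point coincides with the minimizer $m^{*}$ of $J_{\pi_m}$, rather than merely knowing that each $\pi_{m(\beta)}$ and the average-cost optimum are separately of threshold type. This is exactly where the finiteness from Lemma~\ref{lemma:m>4}, the eventual monotonicity from Lemma~\ref{lemma:m}, and the continuity relation \eqref{eq:lippman} must be assembled together, following the template of \cite{1103637}; the asymptotics of Lemma~\ref{lemma:asymptotic} are what rule out an unbounded sequence of thresholds outperforming $\pi_{m^{*}}$.
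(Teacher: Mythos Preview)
Your proposal is correct and follows essentially the same route as the paper: invoke Theorem~\ref{theorem:infinite} for the discounted problem, use Lemmas~\ref{lemma:m>4} and~\ref{lemma:m} together with Lippman's relation~\eqref{eq:lippman} to confine the discounted-optimal thresholds to a finite set for $\beta$ close to~$1$, and then apply \cite[Theorem~6]{10.2307/2629409} along a subsequence $\beta_n\to 1$ to extract an average-cost optimal threshold policy, exactly in the style of~\cite{1103637}. The paper's proof is terser and does not explicitly split into the cases $p\geq q$ versus $p<q$, but the logical skeleton is the same.
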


\begin{proof}
From Lemma \ref{lemma:m} we can infer that for some $x\in X$, we have $J^\beta_{\pi_m}(x)  < J^\beta_{\pi_{m+1}}(x) $ for all $\beta$ sufficiently close to 1.
 Theorem \ref{theorem:infinite}  
asserts that a threshold policy 
is optimal for the discounted cost problem. Therefore, some policy in the set of threshold policies $\{\pi_1, \pi_2, \ldots, \pi_m\}$ is optimal for each discount factor $\beta$ sufficiently close to $1$. Since now there exists a threshold policy $\pi_k$, which is optimal for each discount factor $\beta_n$, with $\beta_n \rightarrow 1$, by \cite[Theorem 6]{10.2307/2629409} the average cost problem has an optimal policy which is a member of the same set $\{\pi_1, \pi_2, \ldots, \pi_m\}$. Therefore, the optimal policy is of threshold type.
\end{proof}

Theorem \ref{theorem:average} provides an algorithm to determine an optimal threshold: starting with $\pi_0$, we use 	\eqref{eq:finalJm2} to determine a $m$ such that
$$J_{\pi_0}\geq J_{\pi_1}\geq J_{\pi_2}\geq\cdots\geq J_{\pi_m} \text{ and }  J_{\pi_m} < J_{\pi_{m+1}}.$$
Then $\pi_m$ is the optimal policy.

\section{Numerical results and Discrete event simulation (DES)}\label{sec:results}

In this section, we present the numerical computation of $J_{\pi_m}$ in equation \eqref{eq:finalJm2} and analyze some discrete event simulations. We analyze  values for $p$, $q$, and $\kappa$ considering three different scenarios: i) trucks and platoons arrive with the same probability, ii) platoons arrive with twice as high probability compared to the trucks, and iii) platoons arrive with  approximately $45\%$ more probability than trucks. These scenarios give a general idea of the system's behaviour. For instance,  by increasing $\kappa$ or $q$ (resp. decreasing $p$), we are essentially increasing the threshold $m$. 

In Figures \ref{fig:thr0.5}, \ref{fig:thr0.4}, and \ref{fig:thr0.45}, we present the computation of $J_{\pi_m}$, where we vary $m$. The line in blue graphs the numerical computation of $J_{\pi_m}$. The vertical line in orange highlights the optimal threshold. 

In Figures \ref{fig:sim0.5}, \ref{fig:sim0.4}, and \ref{fig:sim0.45}, we show the results of $30$ discrete event simulations that ran for $1,000,000$ steps each. The line in orange indicates the cost of the optimal policy $\pi_m$. 
The blue dots show the outcome of each simulation, and the blue line illustrates the mean of all simulations. The interval in light blue represents the 99\% confidence interval.

Figure \ref{fig:all0.5} presents the results for $p=q=0.5$ and $\kappa=10$.  In Figure \ref{fig:thr0.5}, the optimal threshold is $m=1$. Using $m = 1$, we ran the respective DES, as shown in Figure  \ref{fig:sim0.5}. The mean of the discrete simulations overlaps the optimal value of $J_{\pi_m}$. It follows from \ref{fig:thr0.5} that $J_{\pi_m}$ increases as $m$ increases, which is consistent with Lemma \ref{lemma:asymptotic}.

\begin{figure}[h]
\centering
\begin{subfigure}[t]{.5\textwidth}
  \centering
  \includegraphics[width=1\linewidth,trim={0cm 0cm 0 1.1cm},clip]{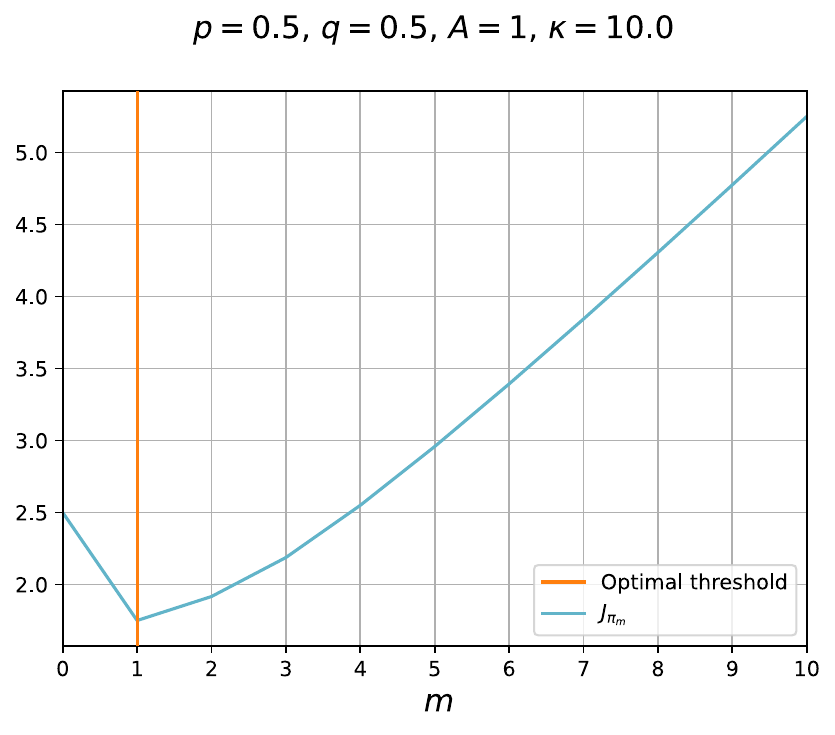}
  \caption{Theoretical computation  $J_{\pi_m}$.}
  \label{fig:thr0.5}
\end{subfigure}\begin{subfigure}[t]{.5\textwidth}
  \centering
  \includegraphics[width=1\linewidth,trim={0cm 0cm 0 .75cm},clip]{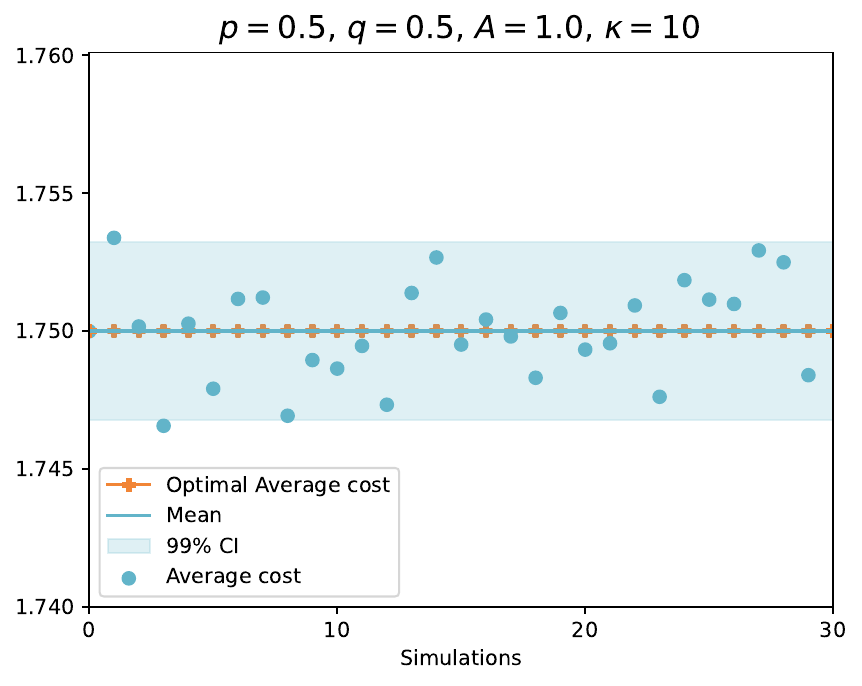}
  \caption{Discrete event simulation.}
  \label{fig:sim0.5}
\end{subfigure}
\caption{For $p=0.5$, $q=0.5$, and $\kappa=10$.}
\label{fig:all0.5}
\end{figure}
Figure \ref{fig:all0.4} presents the results for $p=0.4,$ $q=0.8$ and $\kappa=5$. In Figure \ref{fig:thr0.4}, the optimal threshold is $m=2$. 
In contrast to Figure \ref{fig:thr0.5}, $J_{\pi_m}$ appears to be ``insensitive'' as $m$ increases above the optimal threshold. Since platoons arrive with a high probability relative to cars, and because it is optimal to dispatch a car with an arriving platoon (according to Lemma~\ref{lemma:dispatch}), we expect the queue of waiting trucks to be very small. So, most of the time, the queue size does not exceed the threshold. 
In Figure \ref{fig:thr0.4}, the mean and optimal average costs again overlap.

\begin{figure}[h]
\centering
\begin{subfigure}[t]{.5\textwidth}
  \centering
  \includegraphics[width=1\linewidth,trim={0cm 0cm 0 1.1cm},clip]{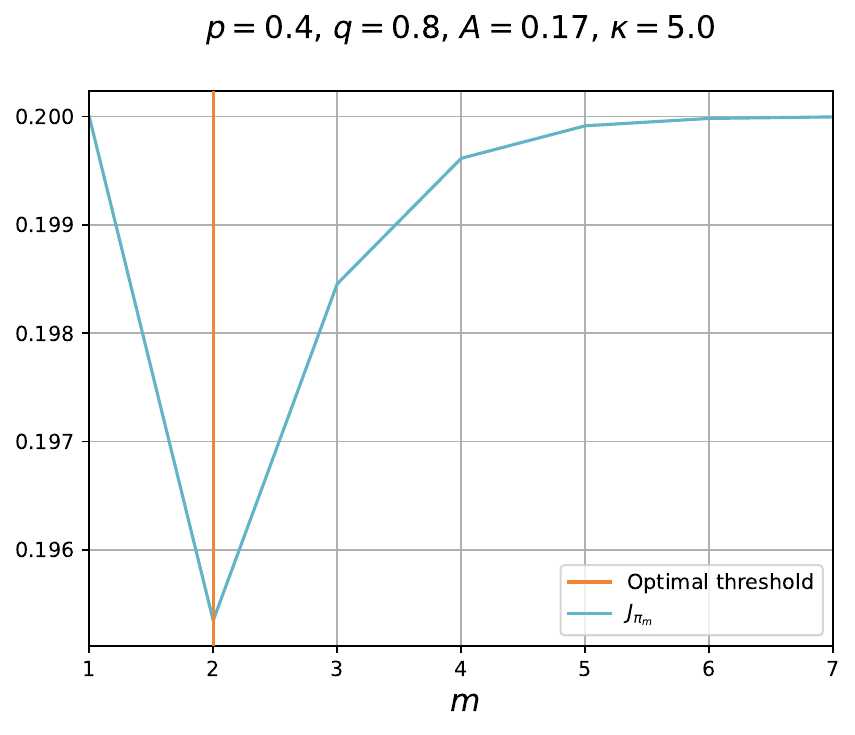}
  \caption{Theoretical computation  $J_{\pi_m}$.}
  \label{fig:thr0.4}
\end{subfigure}%
\begin{subfigure}[t]{.5\textwidth}
  \centering
  \includegraphics[width=1\linewidth,trim={0cm 0cm 0 .75cm},clip]{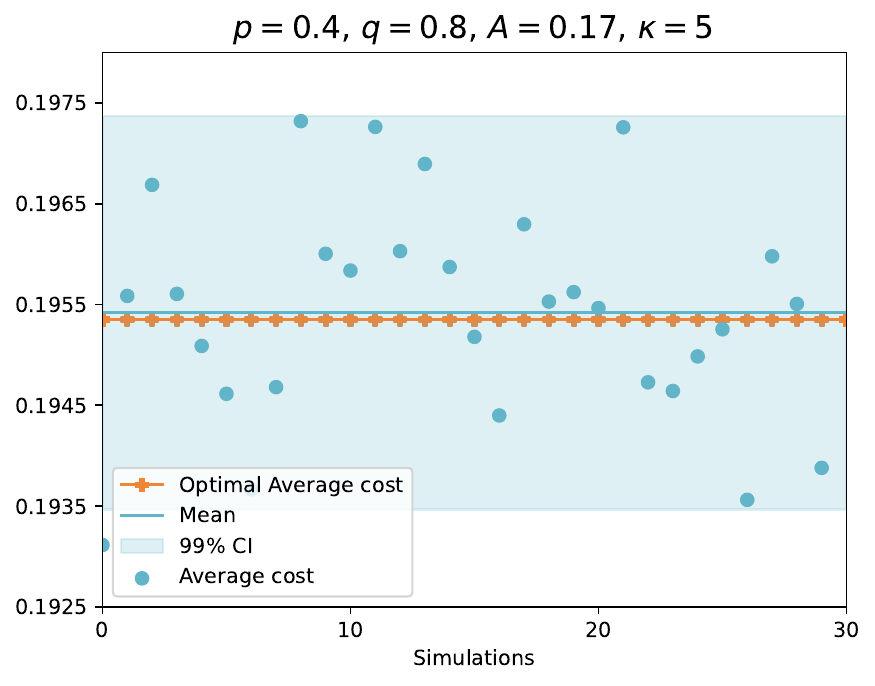}
  \caption{Discrete event simulation.}
  \label{fig:sim0.4}
\end{subfigure}%
\caption{For $p=0.4$, $q=0.8$, and $\kappa=5$.}
 \label{fig:all0.4}
\end{figure}

In Figure \ref{fig:all0.45}, we present the results for $p=0.45,$ $q=0.65$ and $\kappa=20$. These results are also consistent with Lemma \ref{lemma:asymptotic} when $p<q$. The threshold is $m=4$ from Figure  \ref{fig:thr0.45}.  The mean of DES costs and the optimal average cost again overlap each other in Figure \ref{fig:sim0.45}. 

\begin{figure}[h]
\begin{subfigure}{.5\textwidth}
  \centering
  \includegraphics[width=1\linewidth,trim={0cm 0cm 0 1.1cm},clip]{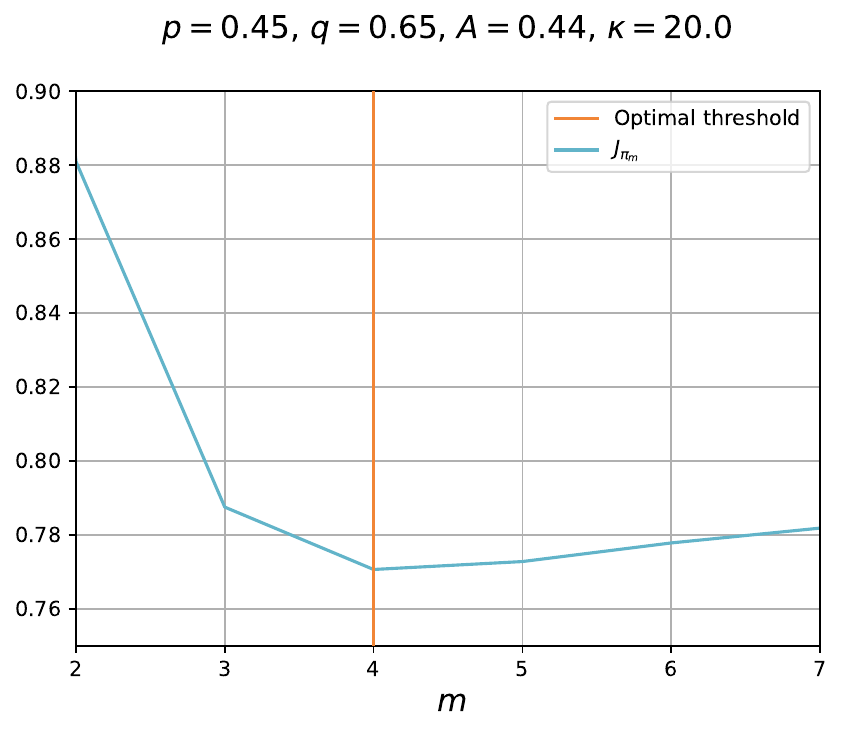}
  \caption{Theoretical computation  $J_{\pi_m}$.}
  \label{fig:thr0.45}
\end{subfigure}%
\begin{subfigure}{.5\textwidth}
  \centering
    \includegraphics[width=1\linewidth,trim={0cm 0cm 0 .75cm},clip]{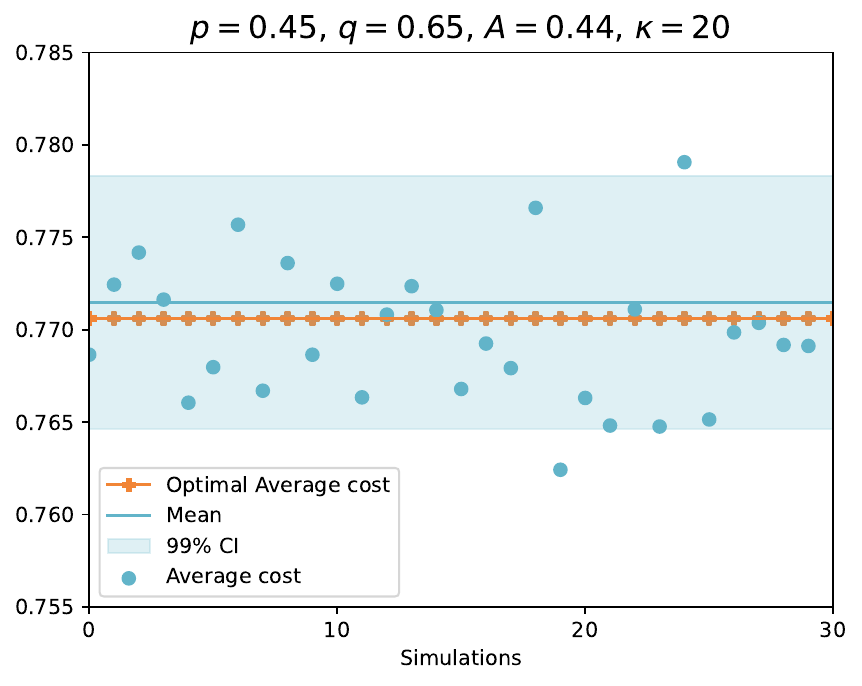}
  \caption{Discrete event simulation.}
  \label{fig:sim0.45}
\end{subfigure}%
\caption{For $p=0.45$, $q=0.65$, and $\kappa=20$.}
 \label{fig:all0.45}

\end{figure}

\section{Conclusion and Future Research}\label{sec:conclusion}
\ifbulletlist
{\color{red} 
\begin{enumerate}
    \item We summarize the paper with our contributions.
    \item Future work: main focus on platoons of different sizes/ different saving costs.
\end{enumerate}
}
\fi 

In this paper, we studied the problem of dispatching trucks to  platoons arriving at a highway station. We modelled the problem as a Markov Decision Problem (with a one-dimensional Markov Chain). The dispatching action aims   to minimize a cost function at the station. This function can be $\beta$-discounted (with finite or infinite horizon) as well as  long-run average.  We first proved that for all cost types, the optimal policy   is a threshold-type policy. This threshold is finite for the finite horizon discounted cost, average cost criteria, and for the infinite horizon discounted cost when $\beta$ is sufficiently close to $1$.  We then presented some numerical results regarding the optimal policy. 

The results can be generalized in several ways in future work. For instance, we could  
consider investigating a batch dispatching model where we empty the queue upon a platoon's arrival. More generally, we could send $r$ trucks from the station or empty it when $r \geq x_n$. In both models, the cost of platooning could be introduced since we assume no cost for platooning in the current model. We suspect that the optimal policy for the batch dispatching model is also of threshold type. Perhaps with a greater threshold than the current model for the same parameters since we can now send multiple trucks from the queue.

Another approach would be to investigate a model where dispatching a truck to a platoon 
may have random savings or costs. Finally, we could consider a model where we can dispatch $r_n$ trucks from the station, where $r_n$ will depend on the available space on an arriving platoon.



\end{document}